\documentclass[preprint,12pt]{elsarticle}

\journal{Mathematical Biosciences}
\biboptions{sort&compress}

\usepackage{amsmath,amssymb,amsthm,mathtools}
\usepackage{graphicx}
\usepackage{array}
\usepackage{booktabs}
\usepackage{float}
\usepackage{tabularx}
\usepackage{longtable}
\usepackage{tikz}
\usepackage[margin=1in]{geometry}
\usepackage[hidelinks]{hyperref}

\AtBeginDocument{%
  \hypersetup{
    pdftitle={Graph-Induced Tensor Liftings for Networked SEIR Models: Dimensional Reduction and Residual Analysis},
    pdfauthor={Enrique Baeyens},
    pdfsubject={Graph-induced tensor lifting for deterministic networked SEIR models},
    pdfkeywords={networked SEIR models, epidemic networks, graph-induced lifting, Carleman lifting, residual bounds, dimensional reduction}
  }%
}

\newtheorem{proposition}{Proposition}

\newtheorem{lemma}{Lemma}
\newtheorem{corollary}{Corollary}
\newtheorem{remark}{Remark}
\newcounter{algorithm}
\renewcommand{\thealgorithm}{\arabic{algorithm}}
\newcommand{\matA}{\mathbf{A}}
\newcommand{\matM}{\mathbf{M}}
\newcommand{\matH}{\mathbf{H}}
\newcommand{\matB}{\mathbf{B}}
\newcommand{\matK}{\mathbf{K}}
\newcommand{\Id}{\mathbf{I}}
\newcommand{\matzero}{\mathbf{0}}

\begin{document}

\begin{frontmatter}

\title{{\bfseries Graph-Induced Tensor Liftings for Networked SEIR Models: Dimensional Reduction and Residual Analysis}}

\author[uva]{Enrique Baeyens\corref{cor1}}
\cortext[cor1]{Corresponding author. Tel.: +34 983 42 39 09}
\ead{enrique.baeyens@uva.es}

\affiliation[uva]{
  organization={Department of Systems Engineering and Automatic Control and
  Institute of Advanced Production Technologies (ITAP), University of Valladolid},
  addressline={Paseo del Prado de la Magdalena, 3--5},
  postcode={47011},
  city={Valladolid},
  country={Spain}
}

\begin{abstract}
Networked SEIR models describe epidemic spread within and between interacting
subpopulations through contact-supported nonlinear transmission. Standard
polynomial liftings based on complete ordered Kronecker tensors yield linear
higher-dimensional representations, but their dimensions grow rapidly because
they retain interactions absent from the transmission graph. This paper
develops a graph-induced tensor lifting whose observables are selected from the
effective transmission support.

An exact edge-based quadratic representation separates linear compartmental
transitions from nonlinear infection terms. A homogeneous hierarchy is then
constructed recursively. The quadratic transmission field generates the next
degree. The linear compartmental field saturates the resulting dictionary
within that degree. The first edge-closure dynamics are linear up to an
explicit cubic truncation residual, and higher-order truncations contain only
next-degree terms.

The first lifted dimension scales with the numbers of subpopulations and effective transmission channels. At fixed order, graph-induced dictionaries grow linearly with network size under uniformly bounded local connectivity, whereas complete polynomial liftings retain order-dependent polynomial growth.  Uniform first edge-closure residual bounds depend on the transmission rate and the maximum weighted incoming transmission intensity. Numerical illustrations compare equal intensity per active channel with equal total incoming intensity.  They confirm that dictionary dimensions depend only on graph support, whereas residual trajectories also reflect weight accumulation, weight distribution, and nonlinear propagation. These results provide a structured basis for reduced modeling and subsequent model-specific analysis and control.

\end{abstract}
\begin{keyword}
Networked SEIR models \sep Epidemic networks \sep Graph-induced lifting \sep
Carleman lifting \sep Residual bounds \sep Dimensional reduction
\end{keyword}

\end{frontmatter}

\section{Introduction}
\label{sec:introduction}

Compartmental epidemic models provide a classical mechanism-based description of
infectious disease dynamics. Since the work of Kermack and McKendrick,
susceptible--infectious--recovered and related formulations have supported the
analysis of outbreaks, endemic behavior, and interventions
\cite{KermackMcKendrick1927,AndersonMay1991,Hethcote2000}. Threshold quantities
and their construction for compartmental systems are treated systematically in
\cite{VanDenDriesscheWatmough2002,DiekmannHeesterbeekRoberts2010}.

Contact heterogeneity and spatial organization motivate network and
metapopulation epidemic models. In these models, nodes represent individuals,
groups, regions, or patches, while edges encode effective contacts or
movement-mediated coupling
\cite{KeelingEames2005,Newman2002,ColizzaVespignani2007}. This literature has
established that topology can shape epidemic thresholds and transients.
Broader authoritative treatments are given in
\cite{PastorSatorrasEtAl2015,KissMillerSimon2017}.

Contact-level organization is also central to pairwise and edge-based
formulations \cite{HouseKeeling2011,Volz2008,MillerSlimVolz2012}.
Effective-degree models \cite{LindquistEtAl2011}, message-passing methods
\cite{KarrerNewman2010}, and graph-automorphism lumping
\cite{SimonTaylorKiss2011} provide related descriptions or reductions using
degree classes, directed-edge messages, or symmetry classes. Together, these
approaches show that contact-supported transmission and edge-level variables
are already well established in epidemiology.

The networked SEIR model studied here belongs to this class of structured
epidemic systems. Each subpopulation contains susceptible, exposed,
infectious, and removed compartments. Transmission may occur both within a
subpopulation and between interacting subpopulations, and both mechanisms are
represented through effective transmission channels. These channels describe
cross-population infection pressure rather than explicit migration of the
compartment states, so each local population retains its own compartmental
mass. The products associated with the channels are used here differently from
classical pairwise or edge-based closures. They seed an analytical polynomial
observable hierarchy. The original deterministic node-level SEIR state remains
part of the lifted state.

In this setting, the nonlinear terms arise from susceptible--infectious
interactions along admissible transmission channels, whether within one
subpopulation or between two interacting subpopulations. They are therefore
supported only on the effective contacts through which transmission can occur,
rather than on arbitrary quadratic state interactions.

Koopman operator theory provides the broader observable-space viewpoint
\cite{Koopman1931,Mezic2005,BruntonBudisicKaiserKutz2022}. Dynamic mode
decomposition and its Koopman interpretation
\cite{RowleyEtAl2009,Schmid2010}, together with extended dynamic mode
decomposition \cite{WilliamsKevrekidisRowley2015}, show how finite sets of
observables can support linear representations and prediction. Finite lifted
predictors are developed, for example, in \cite{KordaMezic2018}. Prescribed or
learned finite dictionaries are therefore an established methodological choice.
They are not a new feature of this study. Network structure has also been
incorporated into Koopman methods. Examples include distributed geometric
learning based on graph partitions and sparsity \cite{MukherjeeEtAl2022}, as
well as nonlinear-network identification that localizes predefined dictionary
functions after identifying node neighborhoods
\cite{AnantharamanMauroy2025}.

Carleman linearization provides the corresponding polynomial viewpoint.
Polynomial systems admit infinite-dimensional linear embeddings through
monomial coordinates. Finite sections then provide tractable approximations
\cite{Carleman1932,KowalskiSteeb1991}. Convergence conditions, explicit
trajectory-error estimates for complete finite sections, and their use in
reachability analysis have been studied in general settings
\cite{AminiZhengSunMotee2025,ForetsSchilling2021}. The residual bounds developed
below do not replace that trajectory-error theory. They instead characterize
the omitted forcing of a reduced graph-induced SEIR dictionary in
epidemiological and graph quantities.

Operator methods have already addressed several epidemic questions. DMD has
been used to extract spatiotemporal patterns from infectious-disease data
\cite{ProctorEckhoff2015}, and Koopman-based prediction has been demonstrated
for COVID-19 and influenza time series \cite{MezicEtAl2024Epidemic}. Finite
observable constructions have also been used for compartmental epidemic
models. Examples include EDMD approximations of the classical SIR model
\cite{LeventidesMelasPoulios2023}, a recent SIRSD study comparing minimal and
enriched epidemiological dictionaries \cite{ZinihiEtAl2026}, Carleman
approximants for SIR dynamics \cite{MunozSanchezEtAl2025}, and a data-driven
algebraic estimate of the effective reproduction number based on Carleman
linearization \cite{MunozSanchezEtAl2026}. These studies address prediction,
approximation, inference, and dictionary design. The present work instead
examines how contact-supported nonlinearities generate an analytical reduced
dictionary whose size and closure residual can be characterized in
graph-theoretic terms.

Recent work has also connected nonlinear graph dynamics with linear dynamics on
higher-order network representations through Carleman arguments
\cite{Lacasa2026}. That preprint establishes a general structural--dynamical
connection and interprets lifted monomials as states of higher-order
combinatorial objects. The construction developed here provides an
SEIR-specific formulation for deterministic networked epidemic dynamics.

Against this background, this paper develops an analytical graph-induced
lifting for deterministic networked SEIR dynamics. The hierarchy is constructed
degree by degree. Nonlinear transmission generates the next homogeneous degree.
The linear compartmental transitions then saturate the resulting observable set
within that degree. The construction remains tied to local contact patterns and
supports the analysis of both dimensional growth and omitted next-degree
forcing. It complements earlier Koopman, Carleman, and graph-aware approaches
by deriving the reduced dictionary analytically from the contact-supported
nonlinearities and enforcing linear closedness under the linear compartmental
dynamics. The principal novelty lies in relating this linearly closed hierarchy,
its complexity, and its truncation residual explicitly to network structure.

The main contributions are as follows. First, an exact quadratic representation
separates local compartmental transitions from nonlinear transmission processes
supported by the effective channels. Second, a reduced lifting is constructed
from infection- and exposure-related channel products. Its finite-dimensional
dynamics are linear up to an explicit cubic residual. Third, a homogeneous
hierarchy is introduced in which each new degree is generated by the quadratic
transmission field and then saturated under the linear compartmental field.
Fourth, the analysis establishes dimensional-scaling results and graph-dependent
residual bounds. Finally, the numerical experiments separate support-dependent
dictionary complexity from dynamic weighting effects. They compare equal
channel intensities with equal total incoming intensity at each receiving node.

At the first lifted order, complexity is governed by the number of effective
channels. At higher fixed orders, it depends on local connectivity. When local
connectivity remains uniformly bounded, the graph-induced hierarchy grows
linearly with network size. This contrasts with complete polynomial liftings.

The framework also provides a basis for subsequent Lyapunov analysis and
predictive control. These developments are left for future work. Extensions to
other networked compartmental models are possible when nonlinear processes are
supported on identifiable channels and differentiation yields a manageable
polynomial hierarchy. Nonpolynomial incidence, explicit migration, delays, and
additional nonlinear transitions require separate constructions.

The paper is organized as follows. Section~\ref{sec:networked_SEIR_models}
introduces the networked SEIR model and its structured quadratic
representation. Section~\ref{sec:tensor_based_lifting_framework} develops the
graph-induced tensor lifting framework and the first edge-closure dynamics.
Section~\ref{sec:structural_properties_truncation_residuals} analyzes
invariance, sparsity, dimensional scaling, and residual bounds.
Section~\ref{sec:numerical_illustrations} presents numerical illustrations of
dimension, residual behavior, and higher-order dictionary growth.
Section~\ref{sec:conclusions} concludes with scope, limitations, and possible
extensions.

\section{Networked SEIR models and structured quadratic representations}
\label{sec:networked_SEIR_models}

This section introduces the mathematical formulation of the networked SEIR
model and develops an edge-based quadratic representation that explicitly
preserves the effective transmission structure. The resulting formulation
provides the foundation for the tensor-based lifting procedures presented in
Sections~\ref{sec:tensor_based_lifting_framework}--\ref{sec:numerical_illustrations}.

\subsection{Graph-theoretic formulation}

Let $\mathcal{G}=(\mathcal{V},\mathcal{E})$ be a weighted contact network
composed of $n$ interacting subpopulations. The vertex set
$\mathcal{V}=\{1,2,\ldots,n\}$ represents the collection of geographical
regions, demographic groups, or communities under consideration. The effective
transmission support $\mathcal{E}\subseteq\mathcal{V}\times\mathcal{V}$ is
written in directed form and describes the interactions through which
infections may be transmitted, both within and between subpopulations.

The strength of these interactions is characterized by a weighted adjacency
matrix $\matA=[a_{ij}]\in\mathbb{R}^{n\times n}$, where $a_{ij}\geq 0$
quantifies the epidemiological coupling from source node $j$ to receiving node
$i$. The effective transmission support is $\mathcal
E=\{(i,j)\in\mathcal V\times\mathcal V:a_{ij}>0\}$. For $i\neq j$, a pair
$(i,j)\in\mathcal{E}$ records epidemiological influence from node $j$ to node
$i$: infectious individuals in node $j$ contribute to the infection pressure
experienced by susceptible individuals in node $i$. The associated directed graph edge is $j\to i$, although the ordered pair is indexed as
$(i,j)$ to match the matrix entry $a_{ij}$ and the infection monomial $S_iI_j$.
Thus, $j$ is the source node and $i$ is the receiving or target node. Contact,
mobility, or transportation data may be used to construct the off-diagonal
coefficients, but the coefficients enter the present model only as effective
contributions to infection pressure. They do not represent explicit migration
of susceptible, exposed, infectious, or removed individuals between nodes, and
no compartment-transfer terms between subpopulations are included.

Diagonal entries are also allowed. Specifically, $a_{ii}\geq 0$ represents
effective transmission within subpopulation $i$. A diagonal pair
$(i,i)\in\mathcal{E}$ therefore does not describe self-infection at the
individual level. Instead, it aggregates contacts between susceptible and
infectious individuals belonging to the same subpopulation. 

Throughout, the between-subpopulation contact structure is written in directed
form. An undirected contact between nodes $i$ and $j$ is represented by including
both off-diagonal pairs $(i,j)$ and $(j,i)$. Their weights need not be equal,
although equal values may be used for symmetric interactions. The cardinality
$m=|\mathcal{E}|$ denotes the total number of effective transmission channels,
including diagonal and off-diagonal pairs.

\subsection{Networked SEIR dynamics}

Each node $i\in\mathcal{V}$ is associated with four state variables
$(S_i,E_i,I_i,R_i)$, representing the local susceptible, exposed, infectious,
and removed compartments, respectively.

Throughout this work, the compartmental variables are assumed to represent
normalized population fractions, so that $S_i+E_i+I_i+R_i=1$ for
$i\in\mathcal{V}$. The coefficients $a_{ij}$ therefore encode effective contact
intensities within and between subpopulations, and no additional normalization
with respect to local population sizes is required.

The networked SEIR dynamics are given by 
\begin{equation}
\begin{cases}
\dot{S}_i =
-\beta
\sum_{j=1}^{n}
a_{ij}S_iI_j,
\\
\dot{E}_i =
\beta
\sum_{j=1}^{n}
a_{ij}S_iI_j
-
\sigma E_i,
\\
\dot{I}_i =
\sigma E_i
-
\gamma I_i,
\\
\dot{R}_i =
\gamma I_i,
\end{cases}
\label{eq:seir}
\end{equation}
Here, $\beta>0$ denotes the transmission rate. The quantities $\sigma^{-1}$
and $\gamma^{-1}$ are the average latent and infectious periods, respectively.
Summing the four equations at each node gives
\begin{equation*}
\frac{\mathrm d}{\mathrm dt}\left(S_i+E_i+I_i+R_i\right)=0,
\quad i\in\mathcal V,
\end{equation*}
so every local subpopulation conserves its own compartmental mass. The coupling
modifies the force of infection but does not transfer population mass between
nodes.

Following standard epidemiological terminology, the quantity
$\lambda_i=\beta\sum_{j=1}^{n}a_{ij}I_j$ represents the force of infection
acting on node $i$, so that the incidence term may be written as
$\dot{S}_i=-\lambda_iS_i$. Its decomposition $\lambda_i=\beta
a_{ii}I_i+\beta\sum_{j\neq i}a_{ij}I_j$ separates within-subpopulation
transmission from infection pressure generated by other subpopulations.

For each compartment $X\in\{S,E,I,R\}$, define
$X=\operatorname{col}(X_1,\ldots,X_n)\in\mathbb{R}^n$. 
The complete network state vector is then 
\begin{equation}
x=\operatorname{col}(S,E,I,R)
\in
\mathbb{R}^{4n}.
\label{eq:fullstate}
\end{equation}

The resulting high-dimensional nonlinear system encodes both within- and
between-subpopulation transmission in the effective contact structure.

\subsection{Edge-based interaction structure}

The nonlinear dynamics of the networked SEIR model originate exclusively from
the infection process. In particular, the only nonlinear terms appearing in
\eqref{eq:seir} are the bilinear monomials $S_iI_j$ for $(i,j)\in\mathcal{E}$.
Off-diagonal monomials represent transmission between subpopulations, whereas
diagonal monomials $S_iI_i$ represent transmission within a subpopulation.

Let $\mathcal E=\{(i_e,j_e):e=1,\ldots,m\}$ be a fixed enumeration of the
effective transmission channels. This structure motivates the edge-based infection dictionary $\phi^{SI}(x)\in\mathbb{R}^m$, defined componentwise by
$\phi^{SI}_e(x)=S_{i_e}I_{j_e}$, $e=1,\ldots,m$. The superscript $SI$ indicates
that the dictionary contains susceptible--infectious interaction terms. Its
dimension is therefore determined by the support of the effective transmission
mechanism rather than by the dimension of the full ordered quadratic tensor
space.

\begin{proposition}[Edge-based quadratic representation]
\label{prop:edge_based_quadratic}
Let $\mathcal{G}=(\mathcal{V},\mathcal{E})$ be the weighted contact network
associated with the networked SEIR model \eqref{eq:seir}, and let
$\phi^{SI}(x)\in\mathbb{R}^m$ be the edge-based infection dictionary such that
$\phi^{SI}_e(x)=S_{i_e}I_{j_e}$ for $(i_e,j_e)\in\mathcal{E}$. Then the
networked SEIR dynamics admit the structured quadratic representation
\begin{equation}
\dot{x}=\matA_{\ell}x+\matM\phi^{SI}(x),
\label{eq:stqrep}
\end{equation}
where $\matA_{\ell}$ contains the linear compartmental transitions and
$\matM\in\mathbb{R}^{4n\times m}$ is a sparse coupling matrix. For each channel
$(i_e,j_e)\in\mathcal{E}$, the $e$-th column of $\matM$ has only two nonzero
entries. Specifically, $\matM_{i_e,e}=-\beta a_{i_ej_e}$, whereas
$\matM_{n+i_e,e}=+\beta a_{i_ej_e}$. All other entries of that column are zero.
Consequently, the nonlinear part of the model is completely determined by
$m=|\mathcal{E}|$ transmission-supported bilinear monomials.
\end{proposition}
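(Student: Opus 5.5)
The plan is a direct construction: split the right-hand side of \eqref{eq:seir} into its linear and bilinear parts, then read both matrices off by comparing coefficients block by block in the ordering $x=\operatorname{col}(S,E,I,R)$ of \eqref{eq:fullstate}.

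First, I will define the linear part as a $4\times4$ block matrix with $n\times n$ blocks:
\begin{equation*}
\matA_{\ell}=\begin{bmatrix}\matzero&\matzero&\matzero&\matzero\\ \matzero&-\sigma\Id&\matzero&\matzero\\ \matzero&\sigma\Id&-\gamma\Id&\matzero\\ \matzero&\matzero&\gamma\Id&\matzero\end{bmatrix}.
\end{equation*}
Multiplying out $\matA_{\ell}x$ gives, row by row, $0$, $-\sigma E_i$, $\sigma E_i-\gamma I_i$ and $\gamma I_i$. These are exactly the non-transmission terms in \eqref{eq:seir}.

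Second, I will rewrite the remaining terms, $\mp\beta\sum_j a_{ij}S_iI_j$ in rows $i$ and $n+i$. The key observation is that $a_{ij}=0$ whenever $(i,j)\notin\mathcal E$, because $\mathcal E$ is defined as the support of $\matA$. The sum over $j$ can therefore be restricted to the support:
\begin{equation*}
\beta\sum_{j=1}^n a_{ij}S_iI_j=\sum_{e:\,i_e=i}\beta a_{i_ej_e}\phi^{SI}_e(x).
\end{equation*}
The enumeration of $\mathcal E$ is a bijection, so each pair $(i,j)\in\mathcal E$ contributes exactly once.

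Next, I will define $\matM$ column by column as in the statement: $\matM_{i_e,e}=-\beta a_{i_ej_e}$ and $\matM_{n+i_e,e}=\beta a_{i_ej_e}$, with every other entry zero. Then I compute $(\matM\phi^{SI})_k$ in each of the four row blocks:
\begin{itemize}
\item for $k=i\le n$, it equals $-\sum_{e:\,i_e=i}\beta a_{i_ej_e}\phi^{SI}_e$;
\item for $k=n+i$, it equals the same sum with a plus sign;
\item for $k>2n$, it is zero.
\end{itemize}
Adding these to $\matA_{\ell}x$ recovers \eqref{eq:seir} componentwise, which proves \eqref{eq:stqrep}.

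The column structure follows from the construction. Since $\beta>0$ and $a_{i_ej_e}>0$ on $\mathcal E$, the two listed entries are genuinely nonzero, so each column has exactly two nonzeros. This also makes $\matM$ sparse: it has $2m$ nonzero entries out of $4nm$. The final claim follows because the whole nonlinear part is $\matM\phi^{SI}(x)$, a linear image of the $m$ monomials.

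The only step that needs care is the support restriction in the second step. Terms with $a_{ij}=0$ must drop out, and diagonal channels $(i,i)$ must be handled uniformly with off-diagonal ones, with no double counting. Both follow from $\mathcal E$ being exactly the support of $\matA$ together with the enumeration being a bijection.
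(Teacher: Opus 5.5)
Your proposal is correct and takes essentially the same route as the paper: both assign each channel's bilinear term $S_{i_e}I_{j_e}$, with coefficients $\mp\beta a_{i_ej_e}$, to rows $i_e$ and $n+i_e$, and collect the remaining linear transitions into $\matA_{\ell}$. Your version additionally spells out two points the paper's proof leaves implicit, namely the support restriction ($a_{ij}=0$ for $(i,j)\notin\mathcal{E}$) and the block-by-block componentwise check.
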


\begin{proof}
The only nonlinear terms in \eqref{eq:seir} arise from the infection process.
For each effective transmission channel $(i_e,j_e)\in\mathcal{E}$, including
the diagonal case $i_e=j_e$, the corresponding bilinear interaction is
$S_{i_e}I_{j_e}$. This interaction decreases the susceptible compartment of
node $i_e$ at rate $\beta a_{i_ej_e}S_{i_e}I_{j_e}$ and increases the exposed
compartment of the same node by the same amount. Therefore, the $e$-th
interaction contributes $-\beta a_{i_ej_e}\phi^{SI}_e(x)$ to the equation for
$S_{i_e}$ and $+\beta a_{i_ej_e}\phi^{SI}_e(x)$ to the equation for $E_{i_e}$,
with no direct contribution to any other compartment. Summing over all
effective transmission channels yields the structured quadratic representation
\eqref{eq:stqrep}. Since all remaining terms in the SEIR dynamics are linear
transitions between compartments, the dictionary $\phi^{SI}(x)$ completely
characterizes the nonlinear part of the model.  \end{proof}

Unlike conventional polynomial lifting procedures based on the full ordered
quadratic tensor, the proposed dictionary retains only the interactions
supported by the effective transmission structure. The locality of epidemic transmission induces an intrinsic sparsity pattern
that can be exploited analytically and computationally.

\subsection{Structured quadratic representation}

The edge-based interaction dictionary allows the networked SEIR model to be
represented as a structured quadratic dynamical system.

The linear part of the dynamics is given by
\begin{equation*}
\matA_{\ell}
=
\begin{bmatrix}
\matzero & \matzero & \matzero & \matzero\\
\matzero & -\sigma \Id_n & \matzero & \matzero\\
\matzero & \sigma \Id_n & -\gamma \Id_n & \matzero\\
\matzero & \matzero & \gamma \Id_n & \matzero
\end{bmatrix}
\in
\mathbb{R}^{4n\times 4n},
\end{equation*}
where $\Id_n$ denotes the identity matrix of dimension $n$.

The nonlinear infection terms are collected through the sparse matrix
$\matM\in\mathbb{R}^{4n\times m}$ and the infection dictionary $\phi^{SI}(x)$
defined in Proposition~\ref{prop:edge_based_quadratic}. 
The complete system may therefore be written as \eqref{eq:stqrep}.

An equivalent representation may be obtained by introducing a matrix
$\matH\in\mathbb{R}^{4n\times (4n)^2}$ such that 
\begin{equation*}
\dot{x}
=
\matA_{\ell}x
+
\matH(x\otimes x),
\end{equation*}
where $\otimes$ denotes the Kronecker product~\cite{HornJohnson2012}.

However, the explicit construction of $\matH$ generally embeds the dynamics
into the full ordered quadratic tensor space, whose dimension grows
quadratically with the state dimension. By contrast, the structured
representation \eqref{eq:stqrep} retains only the effective infection processes
supported by the transmission network and therefore preserves the intrinsic
locality of epidemic transmission.

\subsection{A three-node mobility-corridor example}
\label{sec:mobility-corridor-example}

The construction can be illustrated by a simple regional mobility example.
Consider three urban areas connected by a mobility corridor.
Nodes $1$ and $3$ represent smaller peripheral cities, while node $2$
represents an intermediate city that concentrates employment, transport
connections, or shared services. Contacts among individuals within each city
generate within-subpopulation transmission, while daily commuting and
short-distance mobility generate between-subpopulation transmission between
cities $1$ and $2$, and between cities $2$ and $3$. Direct mobility between
cities $1$ and $3$ is assumed to be negligible at the spatial resolution of the
model.

Following the convention introduced above, assume one positive diagonal channel
for each city and two directed off-diagonal channels for each reciprocal
mobility connection. The effective transmission support is 
\(
\mathcal E=\{(1,1),(1,2),(2,1),(2,2),(2,3),(3,2),(3,3)\}.
\)

For the construction below, these channels are enumerated in the displayed
order, grouped by receiving city.  For $i\neq j$, a pair $(i,j)$ means that
infectious individuals associated with city $j$ contribute to the infection
pressure experienced by susceptible individuals in city $i$. For $i=j$, the
pair represents aggregate transmission among individuals belonging to the same
city. The off-diagonal weights need not be symmetric, since commuting flows,
contact intensities, or exposure opportunities may differ in the two
directions.

The susceptible and exposed equations are then 
\[
\begin{aligned}
\dot S_1 &= -\beta S_1(a_{11}I_1+a_{12}I_2), &
\dot E_1 &= \beta S_1(a_{11}I_1+a_{12}I_2)-\sigma E_1,\\
\dot S_2 &= -\beta S_2(a_{21}I_1+a_{22}I_2+a_{23}I_3), &
\dot E_2 &= \beta S_2(a_{21}I_1+a_{22}I_2+a_{23}I_3)-\sigma E_2,\\
\dot S_3 &= -\beta S_3(a_{32}I_2+a_{33}I_3), &
\dot E_3 &= \beta S_3(a_{32}I_2+a_{33}I_3)-\sigma E_3,
\end{aligned}
\]
while $\dot I_i=\sigma E_i-\gamma I_i$ and $\dot R_i=\gamma I_i$ for $i=1,2,3$.

The complete state is ordered as
\(
x=\operatorname{col}(S_1,S_2,S_3,E_1,E_2,E_3,I_1,I_2,I_3,R_1,R_2,R_3).
\)
The nonlinear infection monomials are therefore $S_1I_1$, $S_1I_2$, $S_2I_1$,
$S_2I_2$, $S_2I_3$, $S_3I_2$, and $S_3I_3$. The monomials $S_1I_3$ and $S_3I_1$
do not appear because cities $1$ and $3$ do not have a direct effective contact
in this model. With the edge ordering above, the edge-based infection
dictionary is 
\(
\phi^{SI}(x)=\operatorname{col}(S_1I_1,S_1I_2,S_2I_1,S_2I_2,S_2I_3,S_3I_2,S_3I_3).
\)

Collect the nonzero incoming weights of each receiving city, in the ordering
used in \(\phi^{SI}(x)\), into the vectors
\({a}_1=\operatorname{col}(a_{11},a_{12})\in\mathbb{R}^2\),
\({a}_2=\operatorname{col}(a_{21},a_{22},a_{23})\in\mathbb{R}^3\), and
\({a}_3=\operatorname{col}(a_{32},a_{33})\in\mathbb{R}^2\).
For this three-node network,
\[
\matA_{\ell}=
\begin{bmatrix}
\matzero & \matzero & \matzero & \matzero \\
\matzero & -\sigma \Id_3 & \matzero & \matzero \\
\matzero & \sigma \Id_3 & -\gamma \Id_3 & \matzero \\
\matzero & \matzero & \gamma \Id_3 & \matzero
\end{bmatrix},
\quad
\matM=
\begin{bmatrix}
-\matB \\
\matB \\
\matzero \\
\matzero
\end{bmatrix},
\]
where
\(
\matB
=
\beta
\left(
{a}_1^{\top}
\oplus
{a}_2^{\top}
\oplus
{a}_3^{\top}
\right)
\in\mathbb{R}^{3\times7},
\)
and \(\oplus\) denotes the block-diagonal matrix direct sum
\cite{HornJohnson2012}.

The columns of $\matB$ correspond, in order, to the monomials $S_1I_1$,
$S_1I_2$, $S_2I_1$, $S_2I_2$, $S_2I_3$, $S_3I_2$, and $S_3I_3$. Thus, the
susceptible rows of $\matM$ carry the negative nonlinear contributions, the
exposed rows carry the positive ones, and the infectious and removed rows
receive no direct nonlinear term. Each column contributes only to the
susceptible and exposed equations of its receiving city. Therefore, $\dot
x=\matA_{\ell}x+\matM\phi^{SI}(x)$ gives the complete structured quadratic
representation \eqref{eq:stqrep} for the three-node mobility-corridor network.
Bilinear products such as $S_1I_3$ and $S_3I_1$ are algebraically possible,
but they are not included because they are not supported by the effective
transmission matrix. 

The example shows how the edge-based quadratic representation accommodates
both within- and between-subpopulation infection processes while retaining only
transmission-supported interactions. It also
anticipates the lifting construction of
Section~\ref{sec:tensor_based_lifting_framework}: differentiating the infection
observables $S_iI_j$ generates exposure-type observables $S_iE_j$ on the same
effective transmission channels, which motivates the first edge-closure
dictionary.

\section{Tensor-based lifting framework}
\label{sec:tensor_based_lifting_framework}

This section develops the graph-induced tensor lifting framework. Starting from
the structured quadratic representation in
Section~\ref{sec:networked_SEIR_models}, the SEIR vector field is separated
into its linear compartmental part and its quadratic transmission part. This
separation distinguishes two operations that must be treated differently in the
lifted hierarchy: the linear dynamics propagate observables within a fixed
polynomial degree, whereas the quadratic transmission dynamics generate
observables of the next degree. The construction closes every homogeneous dictionary under the linear SEIR transitions before using the
quadratic mechanism to generate the next block.

\subsection{Tensor notation and polynomial-degree structure}

Let $x\in\mathbb{R}^{4n}$ denote the full networked SEIR state vector in
\eqref{eq:fullstate}. Denote by $x^{\otimes k}$ the $k$-fold Kronecker product,
defined recursively by
\begin{equation}
x^{\otimes k}
=
x^{\otimes(k-1)}\otimes x,
\quad k\geq 2.
\label{eq:ordered_kronecker_tensor}
\end{equation}
The vector $x^{\otimes k}$ is the complete ordered Kronecker tensor of degree
$k$ and has coordinate dimension $(4n)^k$. Because scalar multiplication is
commutative, different ordered coordinates may represent the same polynomial
monomial. A complete symmetric degree-$k$ monomial basis removes these
permutation duplicates and has dimension
\begin{equation}
\binom{4n+k-1}{k}.
\label{eq:symmetric_degree_k_dimension}
\end{equation}
The ordered representation in \eqref{eq:ordered_kronecker_tensor} is retained
because it provides a convenient ambient coordinate space for sparse selection
operators, while the graph-induced dictionaries themselves contain distinct
commutative monomials.

Using \eqref{eq:stqrep}, decompose the SEIR vector field as
\begin{equation}
f(x)=f_{\mathrm L}(x)+f_{\mathrm Q}(x),
\quad
f_{\mathrm L}(x)=\matA_{\ell}x,
\quad
f_{\mathrm Q}(x)=\matM\phi^{SI}(x).
\label{eq:linear_quadratic_vector_field_split}
\end{equation}
Here $f_{\mathrm L}$ contains the linear compartmental transitions, whereas
$f_{\mathrm Q}$ contains the quadratic transmission terms supported by the
effective transmission channels.

For any polynomial observable $q$, its Lie derivative along a vector field $g$
is
\begin{equation}
\mathcal{L}_g q(x)
=
\nabla q(x)^{\top}g(x),
\label{eq:lie_derivative_definition}
\end{equation}
see e.g. \cite{Isidori1995}.
Along trajectories, $\frac{d}{dt}q(x(t))=\mathcal{L}_f q(x(t))$, and
linearity of the Lie derivative gives
$\mathcal{L}_f q=\mathcal{L}_{f_{\mathrm L}}q+
\mathcal{L}_{f_{\mathrm Q}}q$.

\begin{lemma}[Degree propagation under the linear and quadratic fields]
\label{lem:degree_propagation}
If $q$ is a homogeneous polynomial of degree $k\geq 1$, then
$\mathcal{L}_{f_{\mathrm L}}q$ is homogeneous of degree $k$, whereas
$\mathcal{L}_{f_{\mathrm Q}}q$ is homogeneous of degree $k+1$.
\end{lemma}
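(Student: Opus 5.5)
The plan is a direct degree count based on the product structure of the Lie derivative \eqref{eq:lie_derivative_definition}. I would first fix the convention, needed for the statement to be literally true, that the zero polynomial counts as homogeneous of every degree. For example, $\mathcal{L}_{f_{\mathrm Q}}R_i=0$ and $\mathcal{L}_{f_{\mathrm L}}S_i=0$, since the $S$-rows of $\matA_{\ell}$ vanish. With this convention, the homogeneous polynomials of a fixed degree $d$ in the $4n$ state coordinates form a vector space $\mathcal{P}_d$. It is closed under addition and scalar multiplication, and $\mathcal{P}_a\cdot\mathcal{P}_b\subseteq\mathcal{P}_{a+b}$.

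\textbf{Step 1 (gradient).} Let $q\in\mathcal{P}_k$ with $k\geq 1$. Each partial derivative $\partial q/\partial x_r$ lies in $\mathcal{P}_{k-1}$. This is checked on a monomial $x^{\alpha}$ with $|\alpha|=k$: differentiation with respect to $x_r$ gives either $0$ or $\alpha_r x^{\alpha-e_r}$, whose degree is $k-1$. Linearity then extends the claim to all of $\mathcal{P}_k$. When $k=1$, the partials are constants, that is, elements of $\mathcal{P}_0$.

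\textbf{Step 2 (components of the fields).} From \eqref{eq:linear_quadratic_vector_field_split}, each component $(f_{\mathrm L})_r=\sum_s(\matA_{\ell})_{rs}x_s$ lies in $\mathcal{P}_1$. Each component $(f_{\mathrm Q})_r=\sum_{e=1}^m\matM_{re}S_{i_e}I_{j_e}$ lies in $\mathcal{P}_2$, because every entry $\phi^{SI}_e$ of the infection dictionary in Proposition~\ref{prop:edge_based_quadratic} is a product of two coordinates of $x$.

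\textbf{Step 3 (combine).} We have $\mathcal{L}_{f_{\mathrm L}}q=\sum_r(\partial q/\partial x_r)(f_{\mathrm L})_r$. Each summand lies in $\mathcal{P}_{k-1}\cdot\mathcal{P}_1\subseteq\mathcal{P}_k$, so the sum lies in $\mathcal{P}_k$. Similarly, each summand of $\mathcal{L}_{f_{\mathrm Q}}q$ lies in $\mathcal{P}_{k-1}\cdot\mathcal{P}_2\subseteq\mathcal{P}_{k+1}$. An equivalent check uses the scaling $x\mapsto\lambda x$. Since $\nabla q(\lambda x)=\lambda^{k-1}\nabla q(x)$, $f_{\mathrm L}(\lambda x)=\lambda f_{\mathrm L}(x)$, and $f_{\mathrm Q}(\lambda x)=\lambda^2 f_{\mathrm Q}(x)$, the two Lie derivatives scale as $\lambda^k$ and $\lambda^{k+1}$, respectively.

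There is no real obstacle here. The only point requiring care is the degenerate case in which cancellation, or vanishing rows of $\matA_{\ell}$ or $\matM$, produces the zero polynomial. I would handle it by the convention above, noting that the lemma is later used only to locate the degree in which new terms can appear, not to guarantee that they are nonzero.
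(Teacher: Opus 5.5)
Your proposal is correct and follows the paper's own argument: the gradient of a degree-$k$ homogeneous polynomial is homogeneous of degree $k-1$, and pairing it with the linear field $f_{\mathrm L}$ or the quadratic field $f_{\mathrm Q}$ gives degree $k$ or $k+1$, respectively. Your explicit convention that the zero polynomial counts as homogeneous of every degree is needed for cases such as $\mathcal{L}_{f_{\mathrm L}}S_i=0$, and it is a detail the paper leaves implicit.
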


\begin{proof}
The gradient of a homogeneous degree-$k$ polynomial is homogeneous of degree
$k-1$. Since $f_{\mathrm L}$ is linear and $f_{\mathrm Q}$ is quadratic, the
products $\nabla q^{\top}f_{\mathrm L}$ and
$\nabla q^{\top}f_{\mathrm Q}$ have degrees $k$ and $k+1$, respectively.
\end{proof}

Lemma~\ref{lem:degree_propagation} is the organizing principle of the
hierarchy. Same-degree terms generated by the linear compartmental dynamics
belong to the retained homogeneous block and only the quadratic transmission
part creates the next polynomial degree.

\subsection{Compartment selection vectors}

For each node $i\in\mathcal{V}$, define the compartment selection vectors
$s_i,\eta_i,\iota_i,\rho_i\in\mathbb{R}^{4n}$ by
\begin{equation*}
S_i=s_i^{\top}x,
\quad
E_i=\eta_i^{\top}x,
\quad
I_i=\iota_i^{\top}x,
\quad
R_i=\rho_i^{\top}x.
\end{equation*}
For an effective transmission channel $(i,j)\in\mathcal{E}$,
\begin{equation*}
S_iI_j
=
(s_i\otimes\iota_j)^{\top}(x\otimes x),
\quad
S_iE_j
=
(s_i\otimes\eta_j)^{\top}(x\otimes x).
\end{equation*}

\subsection{Linear saturation of homogeneous monomial sets}

For a polynomial $p$, let $\operatorname{Mon}_k(p)$ denote the set of distinct
commutative degree-$k$ monomials that occur in $p$ with nonzero coefficient
after algebraically identical terms have been combined. Factor permutations
are identified, repeated factors are retained, and coefficient multiplicity
does not affect membership in the set.

For a finite set $\mathcal{A}$ of homogeneous monomials and a vector field
$f$, define the one-step degree-$k$ generation operator by
\begin{equation*}
\Gamma_f^{(k)}(\mathcal{A})
:=
\bigcup_{q\in\mathcal{A}}
\operatorname{Mon}_k\!\left(\mathcal{L}_f q\right).
\end{equation*}

For a finite set $\mathcal{A}$ of homogeneous degree-$k$ monomials and a
vector field $f$, define its degree-$k$ saturation under $f$ by
\begin{equation}
\operatorname{Sat}_{f}^{(k)}(\mathcal{A})
:=
\bigcup_{r\geq 0}\mathcal{A}_r,
\label{eq:saturation_operator}
\end{equation}
where the sequence $(\mathcal{A}_r)_{r\geq 0}$ is generated recursively by
\begin{equation}
\mathcal{A}_0=\mathcal{A},
\quad
\mathcal{A}_{r+1}
=
\mathcal{A}_r
\cup
\Gamma_{f}^{(k)}(\mathcal{A}_r),
\quad r\geq 0.
\label{eq:saturation_iteration}
\end{equation}
Equivalently, $\operatorname{Sat}_{f}^{(k)}(\mathcal{A})$ is the smallest
set of degree-$k$ commutative monomials containing $\mathcal{A}$ such that
$\operatorname{span}\operatorname{Sat}_{f}^{(k)}(\mathcal{A})$ is invariant
under $\mathcal{L}_{f}$. Thus, saturation produces a monomial set, and its linear span is the
corresponding invariant subspace.

\begin{lemma}[Finiteness of the SEIR linear saturation]
\label{lem:finite_linear_closure}
For every finite set $\mathcal{A}$ of homogeneous degree-$k$ monomials in the
variables $S$, $E$, and $I$, the saturation
$\operatorname{Sat}_{f_{\mathrm L}}^{(k)}(\mathcal{A})$ is finite. Moreover,
\begin{equation}
\left|\operatorname{Sat}_{f_{\mathrm L}}^{(k)}(\mathcal{A})\right|
\leq
2^k|\mathcal{A}|.
\label{eq:linear_closure_cardinality_bound}
\end{equation}
Saturation preserves the node indices and susceptible factors of each seed
monomial and introduces no removed-compartment factors.
\end{lemma}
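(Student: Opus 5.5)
The plan is to compute the action of $\mathcal{L}_{f_{\mathrm L}}$ on a single monomial and show that it acts factor by factor through a small directed transition structure on the compartment labels. By \eqref{eq:stqrep}, $f_{\mathrm L}$ has components $\dot S_i=0$, $\dot E_i=-\sigma E_i$, $\dot I_i=\sigma E_i-\gamma I_i$, and $\dot R_i=\gamma I_i$.

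Write a degree-$k$ monomial in $S,E,I$ as a multiset of $k$ labelled factors $X^{(1)}_{i_1}\cdots X^{(k)}_{i_k}$ with $X^{(r)}\in\{S,E,I\}$. The Leibniz rule gives
\[
\mathcal{L}_{f_{\mathrm L}}q=\sum_{r=1}^{k}\Bigl(\prod_{s\neq r}X^{(s)}_{i_s}\Bigr)\,\mathcal{L}_{f_{\mathrm L}}X^{(r)}_{i_r}.
\]
Each factor is therefore replaced in turn by its linear image:
\begin{itemize}
\item $S_i\mapsto 0$;
\item $E_i\mapsto -\sigma E_i$;
\item $I_i\mapsto \sigma E_i-\gamma I_i$.
\end{itemize}
Consequently, every monomial in $\operatorname{Mon}_k(\mathcal{L}_{f_{\mathrm L}}q)$ is obtained from $q$ either by leaving it unchanged or by replacing one factor $I_i$ with $E_i$ at the same node. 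Cancellation can only remove monomials, so this is an over-approximation of $\Gamma^{(k)}_{f_{\mathrm L}}$, which suffices for an upper bound.

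The node indices are never altered, and $S$ factors are never created or destroyed. No $R$ factor appears, because $R$ is produced only from $I$ in $\dot R$, and $R$ never occurs as a target of the factor maps above, since we differentiate the monomial rather than generate $R$ coordinates. This already proves the structural claims in the last sentence of Lemma~\ref{lem:finite_linear_closure}.

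Next, I would show by induction on $r$ in \eqref{eq:saturation_iteration} the following claim. Every element of $\mathcal{A}_r$ is obtained from some seed $a\in\mathcal{A}$ by choosing a subset of the $I$-factors of $a$, viewed as factor positions, and converting each chosen factor to the $E$ of the same node. The base case and the induction step are immediate from the one-step description: a one-step conversion of a monomial that was itself obtained from $a$ by converting a subset of $I$-positions converts one more $I$-position of $a$. A seed with $p\le k$ factors equal to some $I$ admits at most $2^p\le 2^k$ such subsets, so the saturation of $\{a\}$ has at most $2^k$ elements. Summing over seeds gives $|\operatorname{Sat}^{(k)}_{f_{\mathrm L}}(\mathcal{A})|\le 2^k|\mathcal{A}|$.

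For finiteness, the sets $\mathcal{A}_r$ form an increasing chain contained in this finite set, so the chain stabilizes and the union in \eqref{eq:saturation_operator} is finite. Commutative identification, which arises when repeated factors are present, only merges subsets and therefore lowers the count.

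The only step requiring care is justifying that one-step generation introduces nothing beyond the single conversions $I\to E$, and in particular that it never produces an $E\to I$ transition. This follows because $\mathcal{L}_{f_{\mathrm L}}E_i=-\sigma E_i$ involves only $E_i$ itself. I would state this explicitly, along with the observation that the bound does not depend on any cancellation of coefficients.
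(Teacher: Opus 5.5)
Your proposal is correct and follows essentially the same route as the paper's proof. Both arguments rest on the same observations: $\mathcal{L}_{f_{\mathrm L}}$ acts factor by factor as $I_i\to E_i$ replacements at a fixed node, which gives at most $2^p\le 2^k$ replacement patterns per seed, followed by a union over seeds and stabilization of the increasing chain. Your explicit induction on the iterates $\mathcal{A}_r$ makes rigorous the step the paper states in a single line.
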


\begin{proof}
Under $f_{\mathrm L}$, susceptible variables have zero derivative, exposed
variables reproduce themselves through $\dot E_i=-\sigma E_i$, and an
infectious variable produces a linear combination of $I_i$ and $E_i$ through
$\dot I_i=\sigma E_i-\gamma I_i$. Therefore, applying
$\mathcal{L}_{f_{\mathrm L}}$ to a monomial can only replace selected
infectious factors $I_i$ by exposed factors $E_i$ with the same node index, and
it cannot introduce new node indices, susceptible factors, or removed factors.
If a degree-$k$ seed contains $r\leq k$ infectious factor occurrences, at most
$2^r\leq 2^k$ replacement patterns are possible. Commutative
canonicalization and repeated indices can only reduce this number. Taking the
union over the seeds gives \eqref{eq:linear_closure_cardinality_bound}, and
the sequence in \eqref{eq:saturation_iteration} stabilizes after finitely
many steps.
\end{proof}

\subsection{Graph-induced quadratic dictionary}

Fix an enumeration
$\mathcal{E}=\{(i_e,j_e):e=1,\ldots,m\}$ of the effective transmission
channels, including diagonal channels when present. The transmission-supported
quadratic seed set is 
\begin{equation}
\mathcal{G}_2^{\mathcal{E}}
:=
\left\{S_iI_j:(i,j)\in\mathcal{E}\right\}.
\label{eq:quadratic_seed_set}
\end{equation}
Its linear saturation is the degree-2 graph-induced dictionary
\begin{equation}
\mathcal{M}_2^{\mathcal{E}}
:=
\operatorname{Sat}_{f_{\mathrm L}}^{(2)}
\!\left(\mathcal{G}_2^{\mathcal{E}}\right)
=
\left\{S_iI_j,\ S_iE_j:(i,j)\in\mathcal{E}\right\}.
\label{eq:graph_induced_monomial_base}
\end{equation}
Indeed,
$\mathcal{L}_{f_{\mathrm L}}(S_iI_j)
=-\gamma S_iI_j+\sigma S_iE_j$ and
$\mathcal{L}_{f_{\mathrm L}}(S_iE_j)=-\sigma S_iE_j$.
Thus, the exposure observables are not an ad hoc enlargement of the infection
dictionary. They are exactly the coordinates required to close the quadratic
block under the linear SEIR transitions.

Define the infection selection operator
$T_{SI}^{\mathcal{E}}\in\mathbb{R}^{m\times(4n)^2}$ by
\begin{equation*}
\left(T_{SI}^{\mathcal{E}}\right)_{e,:}
=
(s_{i_e}\otimes\iota_{j_e})^{\top},
\quad e=1,\ldots,m,
\end{equation*}
so that
\begin{equation*}
\phi^{SI}(x)
=
T_{SI}^{\mathcal{E}}(x\otimes x)
=
\operatorname{col}
\left(S_{i_1}I_{j_1},\ldots,S_{i_m}I_{j_m}\right).
\end{equation*}
Similarly, define
$T_{SE}^{\mathcal{E}}\in\mathbb{R}^{m\times(4n)^2}$ by
\begin{equation*}
\left(T_{SE}^{\mathcal{E}}\right)_{e,:}
=
(s_{i_e}\otimes\eta_{j_e})^{\top},
\quad e=1,\ldots,m,
\end{equation*}
\begin{equation*}
\phi^{SE}(x)
=
T_{SE}^{\mathcal{E}}(x\otimes x)
=
\operatorname{col}
\left(S_{i_1}E_{j_1},\ldots,S_{i_m}E_{j_m}\right).
\end{equation*}
The graph-induced quadratic selection operator and observable vector are
\begin{equation}
T_2^{\mathcal{E}}
=
\begin{bmatrix}
T_{SI}^{\mathcal{E}}\\
T_{SE}^{\mathcal{E}}
\end{bmatrix},
\quad
\Phi_2^{\mathcal{E}}(x)
=
T_2^{\mathcal{E}}(x\otimes x)
=
\begin{bmatrix}
\phi^{SI}(x)\\
\phi^{SE}(x)
\end{bmatrix}
\in\mathbb{R}^{2m}.
\label{eq:phi2_edge_dictionary}
\end{equation}
Each transmission channel contributes two monomials, one of type $SI$ and one
of type $SE$. Hence $|\mathcal{M}_2^{\mathcal{E}}|=2m$.

\subsection{First edge-closure lifting and residual dynamics}

The first edge-closure lifted state is
\begin{equation}
z_2^{\mathcal{E}}(x)
=
\begin{bmatrix}
x\\
\Phi_2^{\mathcal{E}}(x)
\end{bmatrix}
\in
\mathbb{R}^{4n+2m}.
\label{eq:first_edge_lift_tensor}
\end{equation}
For every effective transmission channel $(i,j)\in\mathcal{E}$,
\begin{equation}
\mathcal{L}_f(S_iI_j)
=
-\gamma S_iI_j
+
\sigma S_iE_j
-
\beta
\sum_{\ell:(i,\ell)\in\mathcal{E}}
a_{i\ell}S_iI_{\ell}I_j,
\label{eq:edge_SI_derivative_tensor}
\end{equation}
and
\begin{equation}
\mathcal{L}_f(S_iE_j)
=
-\sigma S_iE_j
-
\beta
\sum_{\ell:(i,\ell)\in\mathcal{E}}
a_{i\ell}S_iI_{\ell}E_j
+
\beta
\sum_{\ell:(j,\ell)\in\mathcal{E}}
a_{j\ell}S_iS_jI_{\ell}.
\label{eq:edge_SE_derivative_tensor}
\end{equation}
The degree-2 terms in these equations are represented exactly because
$\mathcal{M}_2^{\mathcal{E}}$ is linearly closed. The remaining terms are
homogeneous cubic contributions generated by $f_{\mathrm Q}$. Diagonal
channels require no separate formula and may produce repeated-factor monomials
such as $S_iI_i^2$, $S_iI_iE_i$, and $S_i^2I_i$.

\begin{proposition}[First edge-closure lifting]
\label{prop:first_edge_closure_lifting}
The lifted state in \eqref{eq:first_edge_lift_tensor} satisfies
\begin{equation}
\frac{d}{dt}z_2^{\mathcal{E}}(x)
=
\matK_2^{\mathcal{E}}z_2^{\mathcal{E}}(x)
+
r_2^{\mathcal{E}}(x),
\label{eq:lifted_dynamics_tensor}
\end{equation}
for a graph-structured matrix
$\matK_2^{\mathcal{E}}\in
\mathbb{R}^{(4n+2m)\times(4n+2m)}$, where
\begin{equation}
r_2^{\mathcal{E}}(x)
=
\begin{bmatrix}
0\\
r_{SI}^{\mathcal{E}}(x)\\
r_{SE}^{\mathcal{E}}(x)
\end{bmatrix},
\label{eq:residual_block_structure}
\end{equation}
and, for $e=1,\ldots,m$,
\begin{equation}
\left(r_{SI}^{\mathcal{E}}(x)\right)_e
=
-\beta
\sum_{\ell:(i_e,\ell)\in\mathcal{E}}
a_{i_e\ell}S_{i_e}I_{\ell}I_{j_e},
\label{eq:first_residual_si}
\end{equation}
\begin{equation}
\left(r_{SE}^{\mathcal{E}}(x)\right)_e
=
-\beta
\sum_{\ell:(i_e,\ell)\in\mathcal{E}}
a_{i_e\ell}S_{i_e}I_{\ell}E_{j_e}
+
\beta
\sum_{\ell:(j_e,\ell)\in\mathcal{E}}
a_{j_e\ell}S_{i_e}S_{j_e}I_{\ell}.
\label{eq:first_residual_se}
\end{equation}
Every nonzero component of $r_2^{\mathcal{E}}$ is homogeneous of degree three.
\end{proposition}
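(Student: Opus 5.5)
The plan is to build $\matK_2^{\mathcal{E}}$ block by block from the split \eqref{eq:linear_quadratic_vector_field_split}, using Lemma~\ref{lem:degree_propagation} to separate what is represented exactly from what falls into the residual. I will write $\matK_2^{\mathcal{E}}$ in block form with respect to the partition $z_2^{\mathcal{E}}=\operatorname{col}(x,\phi^{SI},\phi^{SE})$:
\begin{equation*}
\matK_2^{\mathcal{E}}=
\begin{bmatrix}
\matA_{\ell} & \matM & \matzero\\
\matzero & -\gamma\Id_m & \sigma\Id_m\\
\matzero & \matzero & -\sigma\Id_m
\end{bmatrix}.
\end{equation*}
The first block row follows directly from Proposition~\ref{prop:edge_based_quadratic}. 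Since $\dot x=\matA_{\ell}x+\matM\phi^{SI}(x)$ holds exactly, the $x$-component of the residual vanishes. This accounts for the zero first block in \eqref{eq:residual_block_structure}.

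For the quadratic blocks I will compute $\mathcal{L}_f=\mathcal{L}_{f_{\mathrm L}}+\mathcal{L}_{f_{\mathrm Q}}$ on each $S_{i_e}I_{j_e}$ and $S_{i_e}E_{j_e}$ using the product rule. The linear part yields $\mathcal{L}_{f_{\mathrm L}}(S_iI_j)=-\gamma S_iI_j+\sigma S_iE_j$ and $\mathcal{L}_{f_{\mathrm L}}(S_iE_j)=-\sigma S_iE_j$, as already recorded after \eqref{eq:graph_induced_monomial_base}. Both expressions lie in $\operatorname{span}\mathcal{M}_2^{\mathcal{E}}$, and $S_iE_j$ belongs to the dictionary exactly because $(i,j)\in\mathcal{E}$. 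This gives the diagonal blocks $-\gamma\Id_m$ and $-\sigma\Id_m$ and the coupling $\sigma\Id_m$, with the same edge index $e$ in both. The quadratic part involves only $\dot S_i$ and $\dot E_i$, since $f_{\mathrm Q}$ has no $I$ or $R$ components. For $S_iI_j$ only $\dot S_i=-\beta\sum_\ell a_{i\ell}S_iI_\ell$ contributes, and multiplying by $I_j$ gives \eqref{eq:first_residual_si}. For $S_iE_j$ the term $\dot S_i\,E_j$ produces the first sum in \eqref{eq:first_residual_se}, and the term $S_i\,\beta\sum_\ell a_{j\ell}S_jI_\ell$ from $\dot E_j$ produces the second. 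The sums may be restricted to $(i,\ell)\in\mathcal{E}$ or $(j,\ell)\in\mathcal{E}$ because $a_{i\ell}=0$ otherwise. Together these reproduce \eqref{eq:edge_SI_derivative_tensor}--\eqref{eq:edge_SE_derivative_tensor}.

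Homogeneity of degree three then follows from Lemma~\ref{lem:degree_propagation} applied with $k=2$, because the residual consists exactly of $\mathcal{L}_{f_{\mathrm Q}}$ of the quadratic observables. The graph structure of $\matK_2^{\mathcal{E}}$ is inherited from $\matM$ and from the edge-indexed identity blocks.

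The only real subtlety is the diagonal and repeated-index cases, for example $j_e=i_e$ or $\ell=j$. There I will check that the product rule still applies verbatim. Coincident indices merely produce monomials such as $S_iI_i^2$ without changing any coefficient, so no separate derivation is needed. I will also note that the decomposition is not unique in principle, since cubic terms are never expressible in the dictionary. Stating the proposition with this particular $\matK_2^{\mathcal{E}}$, which collects every degree-two term, fixes $r_2^{\mathcal{E}}$ uniquely as the degree-three part.
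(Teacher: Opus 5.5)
Your proposal is correct and follows the same route as the paper: the $x$-block is exact by Proposition~\ref{prop:edge_based_quadratic}, the $\mathcal{L}_{f_{\mathrm L}}$ terms stay in the linearly closed block $\mathcal{M}_2^{\mathcal{E}}$, and the $\mathcal{L}_{f_{\mathrm Q}}$ terms are exactly the cubic residual; your explicit block form of $\matK_2^{\mathcal{E}}$ makes the paper's ``collect the retained coefficients'' step concrete. Your closing caveat on non-uniqueness is unnecessary: once $r_2^{\mathcal{E}}$ is prescribed as in the statement, $\matK_2^{\mathcal{E}}$ is determined as a polynomial identity in $x$, because the entries of $z_2^{\mathcal{E}}$ are distinct, hence linearly independent, monomials.
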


\begin{proof}
The $x$-block follows from Proposition~\ref{prop:edge_based_quadratic}. The
linear parts of \eqref{eq:edge_SI_derivative_tensor} and
\eqref{eq:edge_SE_derivative_tensor} remain in the linearly closed quadratic
block. Their quadratic-transmission contributions are exactly the cubic terms
in \eqref{eq:first_residual_si} and \eqref{eq:first_residual_se}. Collecting
the retained coefficients gives $\matK_2^{\mathcal{E}}$, while the omitted
cubic terms give \eqref{eq:residual_block_structure}.
\end{proof}

\subsection{Linearly closed higher-order dictionaries}

Assume that the degree-$k$ dictionary
$\mathcal{M}_k^{\mathcal{E}}$ has been constructed and is linearly closed.
The quadratic transmission field generates the degree-$(k+1)$ seed set
\begin{equation}
\mathcal{G}_{k+1}^{\mathcal{E}}
:=
\Gamma_{f_{\mathrm Q}}^{(k+1)}
\!\left(\mathcal{M}_k^{\mathcal{E}}\right),
\quad k\geq 2.
\label{eq:next_degree_seed_generation}
\end{equation}
The next graph-induced dictionary is its linear saturation,
\begin{equation}
\mathcal{M}_{k+1}^{\mathcal{E}}
:=
\operatorname{Sat}_{f_{\mathrm L}}^{(k+1)}
\!\left(\mathcal{G}_{k+1}^{\mathcal{E}}\right),
\quad k\geq 2.
\label{eq:graph_induced_monomial_recursion}
\end{equation}
Thus, the quadratic field creates the next degree and the linear field closes
that new homogeneous block. By Lemma~\ref{lem:finite_linear_closure}, every
set in the recursion is finite.

At degree three, the seed set produced from
$\mathcal{M}_2^{\mathcal{E}}$ contains the transmission-supported monomials
$S_iI_{\ell}I_j$, $S_iI_{\ell}E_j$, and $S_iS_jI_{\ell}$ with the channel
conditions in 
\eqref{eq:edge_SI_derivative_tensor}--\eqref{eq:edge_SE_derivative_tensor}. 
Its linear saturation additionally contains all same-degree variants obtained by
replacing infectious factors by exposed factors at the same node indices. This saturation is essential: it ensures that differentiating the retained cubic
observables cannot generate an omitted cubic term through the linear
compartmental transitions.

With any fixed ordering of $\mathcal{M}_k^{\mathcal{E}}$, let
$T_k^{\mathcal{E}}$ be a sparse selection operator that extracts one ordered
tensor coordinate as a representative of each commutative monomial. Define
\begin{equation*}
\Phi_k^{\mathcal{E}}(x)
=
\bigl[q(x)\bigr]_{q\in\mathcal{M}_k^{\mathcal{E}}}
=
T_k^{\mathcal{E}}x^{\otimes k},
\quad k\geq 2.
\end{equation*}
Then
$\dim\Phi_k^{\mathcal{E}}=|\mathcal{M}_k^{\mathcal{E}}|$ and different ordered
representatives of the same commutative product are not counted separately.
When derivative equations are assembled, coefficients of identical canonical
monomials are combined within each scalar equation.

\begin{proposition}[Exact degree-coupled graph-induced hierarchy]
\label{prop:exact_degree_coupled_hierarchy}
For every $k\geq 2$, there exist constant matrices
$\matK_{k,k}^{\mathcal{E}}$ and $\matK_{k,k+1}^{\mathcal{E}}$ such that
\begin{equation}
\frac{d}{dt}\Phi_k^{\mathcal{E}}(x)
=
\matK_{k,k}^{\mathcal{E}}\Phi_k^{\mathcal{E}}(x)
+
\matK_{k,k+1}^{\mathcal{E}}\Phi_{k+1}^{\mathcal{E}}(x).
\label{eq:exact_homogeneous_block_dynamics}
\end{equation}
The first term represents the linear SEIR transitions within degree $k$, and
the second represents the quadratic-transmission generation of degree $k+1$.
\end{proposition}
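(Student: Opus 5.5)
The plan is to work componentwise: fix $q\in\mathcal{M}_k^{\mathcal{E}}$ and use the decomposition \eqref{eq:linear_quadratic_vector_field_split} to write
\[
\frac{d}{dt}q(x)=\mathcal{L}_{f_{\mathrm L}}q(x)+\mathcal{L}_{f_{\mathrm Q}}q(x).
\]
By Lemma~\ref{lem:degree_propagation}, the first term is a homogeneous polynomial of degree $k$ and the second is a homogeneous polynomial of degree $k+1$. After combining coefficients of identical canonical monomials, each term is a finite linear combination of distinct commutative monomials with constant coefficients. These coefficients are built from $\beta$, $\sigma$, $\gamma$ and the weights $a_{ij}$ only. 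The proof then reduces to two membership claims: every monomial of $\mathcal{L}_{f_{\mathrm L}}q$ lies in $\mathcal{M}_k^{\mathcal{E}}$, and every monomial of $\mathcal{L}_{f_{\mathrm Q}}q$ lies in $\mathcal{M}_{k+1}^{\mathcal{E}}$.

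For the first claim, note that $\operatorname{Mon}_k(\mathcal{L}_{f_{\mathrm L}}q)\subseteq\Gamma^{(k)}_{f_{\mathrm L}}(\mathcal{M}_k^{\mathcal{E}})$. The set $\mathcal{M}_k^{\mathcal{E}}$ is a saturation, namely \eqref{eq:graph_induced_monomial_base} for $k=2$ and \eqref{eq:graph_induced_monomial_recursion} for $k\ge3$. By Lemma~\ref{lem:finite_linear_closure}, the iteration \eqref{eq:saturation_iteration} stabilizes at some $\mathcal{A}_r=\mathcal{A}_{r+1}$. Hence $\Gamma^{(k)}_{f_{\mathrm L}}(\mathcal{M}_k^{\mathcal{E}})\subseteq\mathcal{M}_k^{\mathcal{E}}$. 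For $k=2$ this can also be read off directly from the explicit formulas $\mathcal{L}_{f_{\mathrm L}}(S_iI_j)=-\gamma S_iI_j+\sigma S_iE_j$ and $\mathcal{L}_{f_{\mathrm L}}(S_iE_j)=-\sigma S_iE_j$. Writing $\mathcal{L}_{f_{\mathrm L}}q=\sum_{p\in\mathcal{M}_k^{\mathcal{E}}}c_{q,p}\,p$, the row of $\matK_{k,k}^{\mathcal{E}}$ indexed by $q$ is then $(c_{q,p})_p$ in the fixed ordering of $\Phi_k^{\mathcal{E}}$. Since $\Phi_k^{\mathcal{E}}=T_k^{\mathcal{E}}x^{\otimes k}$ picks exactly one coordinate per commutative monomial, $p(x)$ is the corresponding entry of $\Phi_k^{\mathcal{E}}(x)$.

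For the second claim, by definition $\operatorname{Mon}_{k+1}(\mathcal{L}_{f_{\mathrm Q}}q)\subseteq\Gamma^{(k+1)}_{f_{\mathrm Q}}(\mathcal{M}_k^{\mathcal{E}})=\mathcal{G}_{k+1}^{\mathcal{E}}$ by \eqref{eq:next_degree_seed_generation}. This definition is stated for $k\ge2$, which covers all cases, including $k=2$ through the base dictionary. Saturation contains its seed ($\mathcal{A}_0=\mathcal{A}$), so $\mathcal{G}_{k+1}^{\mathcal{E}}\subseteq\mathcal{M}_{k+1}^{\mathcal{E}}$. The coefficients of $\mathcal{L}_{f_{\mathrm Q}}q$ in this monomial basis form the $q$-row of $\matK_{k,k+1}^{\mathcal{E}}$, with zero entries for monomials of $\mathcal{M}_{k+1}^{\mathcal{E}}$ that do not occur. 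Stacking the rows over $q$ gives \eqref{eq:exact_homogeneous_block_dynamics}. The identity is exact because no monomial is dropped at either degree.

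The main point requiring care is the first claim: stabilization yields invariance of the whole saturated set, not just of the seeds. The relevant step is that stabilization means $\Gamma_{f_{\mathrm L}}^{(k)}(\mathcal{A}_r)\subseteq\mathcal{A}_r$. The remaining care is bookkeeping. Distinct monomials are linearly independent polynomials, so the coefficients are well defined once like terms are merged. Repeated-factor monomials from diagonal channels, such as $S_iI_i^2$, are handled by the canonical commutative identification rather than by ordered tensor entries. These points ensure that the matrices are constant and uniquely determined given the orderings.
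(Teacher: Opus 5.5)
Your proposal is correct and follows the paper's proof. Linear closure of $\mathcal{M}_k^{\mathcal{E}}$ accounts for $\mathcal{L}_{f_{\mathrm L}}q$, the inclusion $\mathcal{G}_{k+1}^{\mathcal{E}}\subseteq\mathcal{M}_{k+1}^{\mathcal{E}}$ accounts for $\mathcal{L}_{f_{\mathrm Q}}q$, and collecting the constant coefficients gives the two matrices; your stabilization argument for $\Gamma^{(k)}_{f_{\mathrm L}}(\mathcal{M}_k^{\mathcal{E}})\subseteq\mathcal{M}_k^{\mathcal{E}}$ simply spells out the invariance that the paper reads off from its characterization of saturation.
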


\begin{proof}
Linear saturation gives
$\mathcal{L}_{f_{\mathrm L}}
\operatorname{span}\mathcal{M}_k^{\mathcal{E}}
\subseteq
\operatorname{span}\mathcal{M}_k^{\mathcal{E}}$.
The seed definition \eqref{eq:next_degree_seed_generation}, followed by linear
saturation, ensures that every monomial in
$\mathcal{L}_{f_{\mathrm Q}}q$ for
$q\in\mathcal{M}_k^{\mathcal{E}}$ belongs to
$\mathcal{M}_{k+1}^{\mathcal{E}}$. Since the vector-field coefficients are
constant, collecting the corresponding coefficients yields the two matrices in
\eqref{eq:exact_homogeneous_block_dynamics}.
\end{proof}

For a prescribed order $d\geq 2$, define
\begin{equation}
z_d^{\mathcal{E}}(x)
=
\begin{bmatrix}
x\\
\Phi_2^{\mathcal{E}}(x)\\
\Phi_3^{\mathcal{E}}(x)\\
\vdots\\
\Phi_d^{\mathcal{E}}(x)
\end{bmatrix}
=
\begin{bmatrix}
x\\
T_2^{\mathcal{E}}x^{\otimes2}\\
T_3^{\mathcal{E}}x^{\otimes3}\\
\vdots\\
T_d^{\mathcal{E}}x^{\otimes d}
\end{bmatrix}.
\label{eq:higher_order_tensor_lift}
\end{equation}
The original-state block satisfies
$\dot x=\matA_{\ell}x+\matK_{1,2}^{\mathcal{E}}
\Phi_2^{\mathcal{E}}(x)$ for a suitable sparse matrix
$\matK_{1,2}^{\mathcal{E}}$. Combining this identity with
Proposition~\ref{prop:exact_degree_coupled_hierarchy} gives an exact
block-upper-bidiagonal infinite hierarchy by polynomial degree.

\begin{proposition}[Order-$d$ truncation with homogeneous residual]
\label{prop:order_d_homogeneous_residual}
For every $d\geq 2$, the finite lifted state in
\eqref{eq:higher_order_tensor_lift} satisfies
\begin{equation}
\frac{d}{dt}z_d^{\mathcal{E}}(x)
=
\matK_d^{\mathcal{E}}z_d^{\mathcal{E}}(x)
+
r_d^{\mathcal{E}}(x),
\label{eq:higher_order_tensor_dynamics}
\end{equation}
where $\matK_d^{\mathcal{E}}$ contains the retained within-degree and
next-degree couplings through degree $d$, and
\begin{equation}
r_d^{\mathcal{E}}(x)
=
\operatorname{col}
\left(
0,\ldots,0,
\mathcal{L}_{f_{\mathrm Q}}
\Phi_d^{\mathcal{E}}(x)
\right).
\label{eq:order_d_residual_last_block}
\end{equation}
Every nonzero residual component is homogeneous of degree $d+1$.
\end{proposition}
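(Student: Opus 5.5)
The plan is to assemble the finite lifted dynamics block by block, using the exact hierarchy of Proposition~\ref{prop:exact_degree_coupled_hierarchy} for every block of degree $k\le d-1$. For the top block $k=d$ we split its time derivative through \eqref{eq:linear_quadratic_vector_field_split}. The omitted part will then consist only of the quadratic-transmission forcing of the last retained degree.

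\textbf{The $x$-block.} By Proposition~\ref{prop:edge_based_quadratic}, $\dot x=\matA_{\ell}x+\matM\phi^{SI}(x)$. Since $\phi^{SI}$ is the first half of $\Phi_2^{\mathcal{E}}$ in \eqref{eq:phi2_edge_dictionary}, we may take $\matK_{1,2}^{\mathcal{E}}=[\matM\ \matzero]$, and this block has zero residual.

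\textbf{Blocks of degree $2\le k\le d-1$.} For these, \eqref{eq:exact_homogeneous_block_dynamics} expresses $\frac{d}{dt}\Phi_k^{\mathcal{E}}$ exactly in terms of $\Phi_k^{\mathcal{E}}$ and $\Phi_{k+1}^{\mathcal{E}}$. Both are components of $z_d^{\mathcal{E}}$ because $k+1\le d$, so these blocks also contribute zero residual.

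\textbf{The top block $k=d$.} Write $\frac{d}{dt}\Phi_d^{\mathcal{E}}=\mathcal{L}_{f_{\mathrm L}}\Phi_d^{\mathcal{E}}+\mathcal{L}_{f_{\mathrm Q}}\Phi_d^{\mathcal{E}}$. By construction \eqref{eq:graph_induced_monomial_recursion}, $\mathcal{M}_d^{\mathcal{E}}$ is a linear saturation. Its span is therefore $\mathcal{L}_{f_{\mathrm L}}$-invariant, and Lemma~\ref{lem:degree_propagation} guarantees that no degree change occurs. Hence $\mathcal{L}_{f_{\mathrm L}}\Phi_d^{\mathcal{E}}=\matK_{d,d}^{\mathcal{E}}\Phi_d^{\mathcal{E}}$ with the same constant matrix as in Proposition~\ref{prop:exact_degree_coupled_hierarchy}. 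The remaining term $\mathcal{L}_{f_{\mathrm Q}}\Phi_d^{\mathcal{E}}$ is left unexpanded and declared to be the residual. Stacking the blocks, $\matK_d^{\mathcal{E}}$ is block upper bidiagonal: diagonal blocks $\matA_{\ell},\matK_{2,2}^{\mathcal{E}},\dots,\matK_{d,d}^{\mathcal{E}}$ and superdiagonal blocks $\matK_{1,2}^{\mathcal{E}},\dots,\matK_{d-1,d}^{\mathcal{E}}$. The residual is exactly \eqref{eq:order_d_residual_last_block}.

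\textbf{Homogeneity and consistency.} Each component of $\Phi_d^{\mathcal{E}}$ is a degree-$d$ monomial, so Lemma~\ref{lem:degree_propagation} shows that every component of $\mathcal{L}_{f_{\mathrm Q}}\Phi_d^{\mathcal{E}}$ is homogeneous of degree $d+1$ or identically zero. Equivalently, by Proposition~\ref{prop:exact_degree_coupled_hierarchy}, the residual equals $\matK_{d,d+1}^{\mathcal{E}}\Phi_{d+1}^{\mathcal{E}}(x)$, a combination of next-degree dictionary monomials. For $d=2$ this construction reproduces Proposition~\ref{prop:first_edge_closure_lifting}, which serves as a consistency check.

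\textbf{Main obstacle.} The only delicate point is the top block: the split must not place any degree-$d$ term in the residual. This relies precisely on the linear saturation of $\mathcal{M}_d^{\mathcal{E}}$, together with the fact that coefficients of identical canonical monomials are combined so that each representative coordinate is used consistently. If this step failed, same-degree terms would leak into $r_d^{\mathcal{E}}$, and the residual would not be homogeneous of degree $d+1$.
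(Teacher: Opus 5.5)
Your proposal is correct and follows essentially the same route as the paper's proof. Proposition~\ref{prop:exact_degree_coupled_hierarchy} handles every block below degree $d$, linear closedness of $\mathcal{M}_d^{\mathcal{E}}$ keeps the same-degree part of the top block in $\matK_{d,d}^{\mathcal{E}}$, and Lemma~\ref{lem:degree_propagation} shows that the leftover $\mathcal{L}_{f_{\mathrm Q}}\Phi_d^{\mathcal{E}}$, placed in the last residual block, is homogeneous of degree $d+1$; your explicit choice $\matK_{1,2}^{\mathcal{E}}=[\matM\ \matzero]$ and the identification of that block with $\matK_{d,d+1}^{\mathcal{E}}\Phi_{d+1}^{\mathcal{E}}(x)$ only spell out details the paper states in the surrounding text.
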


\begin{proof}
For degrees below $d$, the quadratic-transmission terms are represented in the
next retained block by Proposition~\ref{prop:exact_degree_coupled_hierarchy}.
The degree-$d$ linear terms remain in the linearly closed block
$\Phi_d^{\mathcal{E}}$, whereas its quadratic-transmission terms have degree
$d+1$ by Lemma~\ref{lem:degree_propagation} and are not retained in the
order-$d$ state. Placing these terms in the final residual block yields
\eqref{eq:order_d_residual_last_block}.
\end{proof}

For $d=2$, Proposition~\ref{prop:order_d_homogeneous_residual} reduces to
Proposition~\ref{prop:first_edge_closure_lifting}, and
\eqref{eq:order_d_residual_last_block} reproduces the cubic residual in
\eqref{eq:first_residual_si}--\eqref{eq:first_residual_se}.

\subsection{Structured Carleman and Koopman interpretation}

The construction is a graph-structured Carleman hierarchy with restricted,
dynamics-generated homogeneous blocks. Unlike complete Carleman embeddings,
it retains only graph-supported seeds and the same-degree coordinates required
by the linear compartmental dynamics. In ordered tensor notation, the complete
benchmark uses all $x^{\otimes k}$, whereas the proposed lifting uses
$T_k^{\mathcal{E}}x^{\otimes k}$. It may equivalently be viewed as an
analytically generated Koopman dictionary whose retained evolution is linear
and whose truncation residual is the omitted degree-$(d+1)$ forcing.

\subsection{Recursive construction of the lifted model}

Algorithm~\ref{alg:recursive_lifted_model_construction} summarizes the construction.

\refstepcounter{algorithm}\label{alg:recursive_lifted_model_construction}%
\setlength\LTleft{0pt}
\setlength\LTright{0pt}
\begin{longtable}{@{}r@{\hspace{0.02\textwidth}}
                    >{\raggedright\arraybackslash}p{0.91\textwidth}@{}}
\toprule
\multicolumn{2}{@{}p{0.96\textwidth}@{}}{\textbf{Algorithm \thealgorithm} (Linearly closed
construction of the graph-induced lifting up to order $d$)}\\
\midrule
\endfirsthead
\toprule
\multicolumn{2}{@{}p{0.96\textwidth}@{}}{\textbf{Algorithm \thealgorithm} (continued)}\\
\midrule
\endhead
\midrule
\multicolumn{2}{r@{}}{\textit{Continued on the next page}}\\
\endfoot
\bottomrule
\endlastfoot
1 & Decompose the SEIR vector field as
    $f=f_{\mathrm L}+f_{\mathrm Q}$ using
    \eqref{eq:linear_quadratic_vector_field_split}.\\

2 & Enumerate the effective transmission channels and form the quadratic seed
    set $\mathcal{G}_2^{\mathcal{E}}$ in
    \eqref{eq:quadratic_seed_set}. Apply $\operatorname{Sat}_{f_{\mathrm L}}^{(2)}$ to obtain
    $\mathcal{M}_2^{\mathcal{E}}$ in
    \eqref{eq:graph_induced_monomial_base}.\\

3 & For each $k=2,\ldots,d-1$, apply
    $\Gamma_{f_{\mathrm Q}}^{(k+1)}$ to
    $\mathcal{M}_k^{\mathcal{E}}$ and form the seed set
    $\mathcal{G}_{k+1}^{\mathcal{E}}$ using
    \eqref{eq:next_degree_seed_generation}.\\

4 & Apply $\operatorname{Sat}_{f_{\mathrm L}}^{(k+1)}$ to
    $\mathcal{G}_{k+1}^{\mathcal{E}}$ using
    \eqref{eq:saturation_operator}--
    \eqref{eq:saturation_iteration}. The resulting saturated set is
    $\mathcal{M}_{k+1}^{\mathcal{E}}$ in
    \eqref{eq:graph_induced_monomial_recursion}.\\

5 & At every generation and saturation step, canonicalize products as commutative
    monomials, retain repeated factors, merge duplicate dictionary entries, and
    combine coefficients of identical canonical terms within each scalar
    derivative equation.\\

6 & Choose a deterministic ordering of each
    $\mathcal{M}_k^{\mathcal{E}}$, construct the sparse selection operator
    $T_k^{\mathcal{E}}$, and form
    $\Phi_k^{\mathcal{E}}(x)=T_k^{\mathcal{E}}x^{\otimes k}$.\\

7 & Assemble $z_d^{\mathcal{E}}(x)$ and the block-upper-bidiagonal retained
    operator $\matK_d^{\mathcal{E}}$. Place only
    $\mathcal{L}_{f_{\mathrm Q}}\Phi_d^{\mathcal{E}}(x)$ in the final
    residual block, as in \eqref{eq:order_d_residual_last_block}.\\

8 & Verify that every retained degree is linearly closed by checking that
    $\operatorname{Mon}_k(\mathcal{L}_{f_{\mathrm L}}q)
    \subseteq\mathcal{M}_k^{\mathcal{E}}$ for all
    $q\in\mathcal{M}_k^{\mathcal{E}}$.\\
\end{longtable}

\begin{remark}[Three-subpopulation illustration]
Algorithm~\ref{alg:recursive_lifted_model_construction} separates
next-degree generation by quadratic transmission from same-degree
saturation under the linear compartmental dynamics. Returning to the
three-subpopulation mobility-corridor example of
Subsection~\ref{sec:mobility-corridor-example}, the seven effective
transmission channels generate 7 quadratic seeds, whose linear saturation
contains 14 quadratic monomials. At the next degree, they produce 42
distinct cubic seeds and 67 saturated cubic monomials. Consequently,
\(\dim z_2^{\mathcal E}=26\), \(\dim z_3^{\mathcal E}=93\),
and the order-three residual contains only degree-four terms.
For comparison, the state dimension is \(12\), so the corresponding
complete ordered Kronecker liftings through degrees two and three would
have dimensions \(156\) and \(1884\), respectively. 
The graph-induced construction therefore retains only the
coordinates generated by the effective transmission channels and required
by their subsequent linear evolution, rather than all ordered polynomial
products.
\end{remark}

This example motivates the analysis in
Section~\ref{sec:structural_properties_truncation_residuals}, which quantifies
the reduction in dictionary growth relative to complete polynomial liftings.

\section{Structural properties and truncation residuals}
\label{sec:structural_properties_truncation_residuals}

This section analyzes the graph-induced lifted representations constructed in
Section~\ref{sec:tensor_based_lifting_framework}. The admissible epidemiological
state space and its positive invariance are considered first. The dimensional
analysis is then reformulated consistently with the two-stage dictionary
construction: the quadratic transmission field generates the next homogeneous
degree, while the linear SEIR field closes that degree without introducing new
node indices. This separation yields fixed-order linear scaling on uniformly
bounded-in-degree network families, with order-dependent constants that account
explicitly for linear saturation. The first edge-closure residual is bounded in
terms of weighted incoming transmission intensity, and the higher-order
residual is shown to be a bounded homogeneous forcing of degree one above the
truncation order. Diagonal within-subpopulation channels are included
throughout whenever they belong to the effective transmission support.

\subsection{Admissible epidemiological state space}

Since the compartmental variables are normalized population fractions, the
natural state space of the networked SEIR model is
\begin{equation}
\Omega
=
\left\{
x\in\mathbb{R}_{\geq 0}^{4n}:
S_i+E_i+I_i+R_i=1,\quad i=1,\ldots,n
\right\}.
\label{eq:admissible_state_space}
\end{equation}
The set $\Omega$ is the product of $n$ local probability simplices and
represents the epidemiologically admissible region. Throughout this section, the parameters satisfy the conditions introduced with
the model. In particular, $\beta>0$,
$\sigma>0$, $\gamma>0$, and $a_{ij}\geq 0$.

\begin{proposition}[Positive invariance of the admissible state space]
\label{prop:positive_invariance}
The set $\Omega$ defined in \eqref{eq:admissible_state_space} is positively
invariant under the networked SEIR dynamics \eqref{eq:seir}.
\end{proposition}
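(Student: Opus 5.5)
The proof splits into two parts. First, the affine constraints $S_i+E_i+I_i+R_i=1$ are invariant. Second, the nonnegative orthant is forward invariant for initial data in $\Omega$. The first part follows from the conservation identity already shown after \eqref{eq:seir}: each local sum has zero derivative, so it stays equal to one along any solution. Because the vector field is polynomial, it is locally Lipschitz, so solutions exist and are unique on a maximal interval. Once nonnegativity is established, the solution stays in the compact set $\Omega$. Compactness then rules out finite-time blow-up, and solutions exist for all $t\geq 0$.

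For nonnegativity I would use explicit variation-of-constants representations, handled compartment by compartment.

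For the susceptible variables, $\dot S_i=-\lambda_i(t)S_i$ with $\lambda_i(t)=\beta\sum_j a_{ij}I_j(t)$ continuous along the solution. This gives
\[
S_i(t)=S_i(0)\exp\Bigl(-\int_0^t\lambda_i(\tau)\,d\tau\Bigr)\geq 0,
\]
regardless of the sign of $\lambda_i$.

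For the exposed, infectious and removed variables, I would write
\[
E_i(t)=e^{-\sigma t}E_i(0)+\int_0^t e^{-\sigma(t-\tau)}\lambda_i(\tau)S_i(\tau)\,d\tau,
\qquad
I_i(t)=e^{-\gamma t}I_i(0)+\int_0^t e^{-\gamma(t-\tau)}\sigma E_i(\tau)\,d\tau,
\]
and $R_i(t)=R_i(0)+\gamma\int_0^t I_i(\tau)\,d\tau$.

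These formulas are coupled: the sign of $\lambda_i$ depends on the $I_j$. I would break the circularity with a first-exit-time argument.

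1. Let $t^\ast=\sup\{t: I_j(\tau)\geq 0\ \text{for all } j \text{ and } \tau\in[0,t]\}$, and suppose $t^\ast$ is finite.
2. On $[0,t^\ast]$ we have $\lambda_i\geq 0$, and $S_i\geq 0$ always. Hence the formula for $E_i$ gives $E_i\geq 0$ there.
3. The formula for $I_i$ then gives $I_i(t)\geq 0$ on $[0,t^\ast]$.

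The difficulty is at the boundary. A zero of $I_i$ at $t^\ast$ is compatible with the positivity just obtained, so the argument must show that $I_i$ cannot immediately become negative. Strict positivity does not hold in general, for instance when $I_i(0)=E_i(0)=0$ and no infection arrives. I would therefore replace the first-exit argument with a perturbation argument.

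1. Consider the perturbed system with $\dot I_i=\sigma E_i-\gamma I_i+\varepsilon$ and $\dot E_i=\lambda_iS_i-\sigma E_i+\varepsilon$, started from the same initial data.
2. In the perturbed system, at the first time any $E_i$ or $I_i$ hits zero while the others are nonnegative, its derivative is at least $\varepsilon>0$. This contradicts the assumption that the variable is becoming negative.
3. So the perturbed solutions stay nonnegative.
4. Let $\varepsilon\to 0$ and use continuous dependence on parameters on compact time intervals to conclude nonnegativity for the original system.

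An alternative is the standard quasi-positivity criterion for invariance of $\mathbb{R}^{4n}_{\geq 0}$: on the face $X_k=0$ one checks $\dot X_k\geq 0$ whenever the other coordinates are nonnegative. This holds here:

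- $\dot S_i=0$ on the face $S_i=0$;
- $\dot E_i=\lambda_iS_i\geq 0$;
- $\dot I_i=\sigma E_i\geq 0$;
- $\dot R_i=\gamma I_i\geq 0$.

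Citing this criterion shortens the proof, and the perturbation argument is essentially its proof. Finally, I would combine nonnegativity with the invariant affine constraints. Together they give $0\leq X_i\leq 1$ for every compartment, so the solution remains in $\Omega$ for all $t\geq 0$.
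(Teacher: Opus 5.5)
Your proposal is correct and follows the paper's argument: local mass conservation together with the boundary sign check $\dot S_i=0$, $\dot E_i\ge 0$, $\dot I_i=\sigma E_i\ge 0$, $\dot R_i=\gamma I_i\ge 0$ on the faces of the nonnegative orthant, which is exactly the quasi-positivity criterion you cite. Your $\varepsilon$-perturbation argument and the global-existence remark supply the justification that the paper's one-line ``inward-pointing or tangent'' conclusion leaves implicit.
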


\begin{proof}
For each node $i\in\mathcal{V}$,
$\frac{d}{dt}(S_i+E_i+I_i+R_i)=0$, so the local normalization is preserved.
It remains to verify nonnegativity. If $S_i=0$, then
$\dot S_i=-\beta\sum_{j=1}^{n}a_{ij}S_iI_j=0$. If $E_i=0$, then
$\dot E_i=\beta\sum_{j=1}^{n}a_{ij}S_iI_j\geq0$. If $I_i=0$, then
$\dot I_i=\sigma E_i\geq0$, and if $R_i=0$, then
$\dot R_i=\gamma I_i\geq0$. Thus, the vector field is inward-pointing or
tangent on the boundary of the nonnegative orthant, and every solution
initialized in $\Omega$ remains in $\Omega$.
\end{proof}

Consequently, every compartmental variable lies in $[0,1]$ along trajectories
initialized in $\Omega$. All monomials in the graph-induced dictionaries and
all finite-order residuals are therefore uniformly bounded on this compact
set.

\subsection{Homogeneous block structure and dimensional reduction}

Proposition~\ref{prop:exact_degree_coupled_hierarchy} separates the two sources
of coupling in the lifted hierarchy. The matrices
$\matK_{k,k}^{\mathcal{E}}$ represent propagation within degree $k$ under the
linear compartmental field, whereas $\matK_{k,k+1}^{\mathcal{E}}$ represent
degree-raising interactions generated by the quadratic transmission field.
Linear saturation changes compartment labels from infectious to exposed at fixed
node indices but introduces neither new node indices nor new transmission
channels. The degree-raising blocks, by contrast, extend retained patterns only
through incoming effective transmission channels. Thus, the infinite lifted
operator has the exact block-bidiagonal structure described in
\eqref{eq:exact_homogeneous_block_dynamics}, while its sparsity remains tied to
the effective transmission organization of the original SEIR model.

The dimensional advantage is already apparent for the first edge-closure lifting. The
complete ordered quadratic tensor $x\otimes x$ has coordinate dimension
$(4n)^2=16n^2$, whereas the graph-induced quadratic dictionary in
\eqref{eq:phi2_edge_dictionary} has exactly $2m$ coordinates. Hence
\begin{equation}
\dim z_2^{\mathcal{E}}(x)=4n+2m.
\label{eq:edge_lift_dimension}
\end{equation}
Here $m=|\mathcal{E}|$ counts all effective transmission channels. This count
includes diagonal channels when within-subpopulation transmission is present.

\begin{proposition}[Dimensional reduction in sparse networks]
\label{prop:dimensional_reduction_sparse}
Assume that the effective transmission support satisfies $m=O(n)$. Then
$\dim z_2^{\mathcal{E}}(x)=O(n)$, whereas the complete ordered quadratic
Kronecker lifting has dimension $4n+(4n)^2=O(n^2)$. Thus, the first
edge-closure lifting reduces quadratic growth to linear growth in the number
of nodes. Adding at most one diagonal transmission channel per node preserves
this conclusion because it increases $m$ by at most $n$.
\end{proposition}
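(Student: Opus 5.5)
The plan is to read everything off the dimension formula \eqref{eq:edge_lift_dimension} and then compare it with the obvious count for the complete ordered Kronecker lifting. No dynamical argument is needed: the claim is purely combinatorial once Proposition~\ref{prop:first_edge_closure_lifting} and the saturation identity \eqref{eq:graph_induced_monomial_base} are taken as given.

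\textbf{Step 1: the graph-induced lifting.} By \eqref{eq:graph_induced_monomial_base}, the linearly closed quadratic dictionary is $\mathcal{M}_2^{\mathcal{E}}=\{S_iI_j,S_iE_j:(i,j)\in\mathcal{E}\}$. These $2m$ monomials are distinct: the ordered pair $(i,j)$ is recovered from the node indices, and the monomial type ($SI$ or $SE$) is recovered from the compartment labels. Hence $\dim z_2^{\mathcal{E}}=4n+2m$. Under the hypothesis $m=O(n)$ there is a constant $c$ with $m\le cn$ for all large $n$, so $\dim z_2^{\mathcal{E}}\le(4+2c)n=O(n)$.

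\textbf{Step 2: the complete lifting.} The complete ordered Kronecker lifting through degree two has coordinates $x$ and $x\otimes x$. Its dimension is therefore $4n+(4n)^2=16n^2+4n$, which is bounded above by $20n^2$ and below by $16n^2$. It is thus $\Theta(n^2)$, and in particular $O(n^2)$. The ratio of the two dimensions is $O(1/n)$, which is the precise meaning of reducing quadratic growth to linear growth.

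\textbf{Step 3: diagonal channels.} Adding diagonal channels $(i,i)$, at most one per node, replaces $m$ by some $m'\le m+n\le(c+1)n$. The bound from Step 1 then holds with $c$ replaced by $c+1$.

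The only point requiring any care is the distinctness count in Step 1, which underlies the exact value $2m$ rather than just the bound $\le 2m$. Even this is unnecessary for the $O(n)$ statement, since an upper bound suffices. So there is no real obstacle; it is enough to make the constant in $m=O(n)$ explicit.
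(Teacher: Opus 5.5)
Your proposal is correct and is essentially the paper's own argument, which reads off $\dim z_2^{\mathcal{E}}=4n+2m$ and compares it with $4n+(4n)^2$. The extra pieces you supply only make explicit what the paper's proof leaves implicit: the check that the $2m$ monomials are distinct, the explicit bound $(4+2c)n$, and the estimate $m'\le m+n\le(c+1)n$ when diagonal channels are added.
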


\begin{proof}
The result follows directly from \eqref{eq:edge_lift_dimension}. If $m=O(n)$,
then $4n+2m=O(n)$, while $4n+(4n)^2=O(n^2)$.
\end{proof}

For higher orders, total channel sparsity alone is insufficient because the
number of local transmission extensions depends on incoming connectivity. Let
\begin{equation*}
\mathcal{N}^{\mathrm{in}}(i)
=
\{j\in\mathcal{V}:(i,j)\in\mathcal{E}\},
\quad
\Delta^{\mathrm{in}}
=
\max_{i\in\mathcal{V}}|\mathcal{N}^{\mathrm{in}}(i)|.
\end{equation*}
The set $\mathcal{N}^{\mathrm{in}}(i)$ contains the epidemiological source
nodes whose infectious compartments contribute to the force of infection at
receiving node $i$. If $(i,i)\in\mathcal{E}$, then
$i\in\mathcal{N}^{\mathrm{in}}(i)$, so a diagonal channel contributes one to
the in-degree. Adding at most one diagonal channel per node increases each
in-degree by at most one and therefore preserves a uniform bounded-in-degree
hypothesis.

\begin{proposition}[Linear growth of linearly closed fixed-order liftings]
\label{prop:higher_order_linear_growth}
Consider a family of effective transmission networks satisfying
$\Delta^{\mathrm{in}}\leq\Delta$, where $\Delta\geq1$ is independent of $n$.
Let $d\geq2$ be fixed, and let $\mathcal{M}_k^{\mathcal{E}}$ be the distinct
commutative degree-$k$ dictionaries generated by the quadratic-generation and linear-
saturation recursion \eqref{eq:graph_induced_monomial_recursion}. Then, for every
fixed $k=2,\ldots,d$, there is a finite constant $C_k(\Delta)$ independent of
$n$ such that
\begin{equation}
|\mathcal{M}_k^{\mathcal{E}}|
\leq
C_k(\Delta)n.
\label{eq:fixed_degree_monomial_bound}
\end{equation}
One admissible recursive choice is
\begin{equation}
C_2(\Delta)=2\Delta,
\quad
C_{k+1}(\Delta)
=
2^{k+1}k\Delta\,C_k(\Delta),
\quad k\geq2.
\label{eq:fixed_degree_constant_recursion}
\end{equation}
Consequently, for fixed $d$,
\begin{equation}
\dim z_d^{\mathcal{E}}(x)
=
4n+\sum_{k=2}^{d}|\mathcal{M}_k^{\mathcal{E}}|
=
O(n).
\label{eq:fixed_order_linear_dimension}
\end{equation}
\end{proposition}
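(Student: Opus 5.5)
Induction on $k$: bound the seed set $\mathcal{G}_{k+1}^{\mathcal{E}}$ in terms of $|\mathcal{M}_k^{\mathcal{E}}|$, then apply Lemma~\ref{lem:finite_linear_closure} to pass from seeds to the saturated dictionary. The factor $2^{k+1}$ in \eqref{eq:fixed_degree_constant_recursion} is the saturation cost at degree $k+1$. The factor $k\Delta$ should count the monomials that $\mathcal{L}_{f_{\mathrm Q}}$ can produce from one degree-$k$ monomial.

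\textbf{Base case.} By \eqref{eq:graph_induced_monomial_base}, $|\mathcal{M}_2^{\mathcal{E}}|=2m$. Every channel $(i,j)$ is counted at its receiving node $i$, and each node has at most $\Delta$ incoming channels. Hence $m=\sum_i|\mathcal{N}^{\mathrm{in}}(i)|\le \Delta n$, which gives $|\mathcal{M}_2^{\mathcal{E}}|\le 2\Delta n=C_2(\Delta)n$.

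\textbf{Inductive step.} Fix $q\in\mathcal{M}_k^{\mathcal{E}}$. By Lemma~\ref{lem:finite_linear_closure} and the recursion, $q$ contains no $R$ factors, so only $S$ and $E$ factors are affected by $f_{\mathrm Q}$. By the Leibniz rule, $\mathcal{L}_{f_{\mathrm Q}}q$ is a sum over factor occurrences of $q$, at most $k$ of them. The term for one occurrence is $q$ divided by that factor, times the $f_{\mathrm Q}$-component of the factor.
\begin{itemize}
\item An occurrence of $S_i$ contributes $-\beta\sum_{\ell\in\mathcal{N}^{\mathrm{in}}(i)}a_{i\ell}S_iI_\ell$, that is, at most $\Delta$ monomials.
\item An occurrence of $E_i$ contributes $\beta\sum_{\ell\in\mathcal{N}^{\mathrm{in}}(i)}a_{i\ell}S_iI_\ell$, also at most $\Delta$ monomials.
\item $I$ factors contribute nothing, since $f_{\mathrm Q}$ has zero $I$- and $R$-components.
\end{itemize}
So $|\operatorname{Mon}_{k+1}(\mathcal{L}_{f_{\mathrm Q}}q)|\le k\Delta$. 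Cancellations and merging of identical canonical monomials only reduce this count. Taking the union over $q$ gives $|\mathcal{G}_{k+1}^{\mathcal{E}}|\le k\Delta|\mathcal{M}_k^{\mathcal{E}}|$.

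The seeds are monomials in $S$, $E$, $I$ only, so Lemma~\ref{lem:finite_linear_closure} applies at degree $k+1$ and gives
\[
|\mathcal{M}_{k+1}^{\mathcal{E}}|\le 2^{k+1}|\mathcal{G}_{k+1}^{\mathcal{E}}|\le 2^{k+1}k\Delta\, C_k(\Delta)\,n .
\]
This is exactly the recursion \eqref{eq:fixed_degree_constant_recursion}.

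\textbf{Conclusion.} Summing over $k=2,\dots,d$ with $d$ fixed gives $\dim z_d^{\mathcal{E}}\le 4n+\bigl(\sum_{k=2}^dC_k(\Delta)\bigr)n=O(n)$, which is \eqref{eq:fixed_order_linear_dimension}.

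\textbf{Main obstacle.} The delicate point is the per-monomial count of the quadratic generation step. It requires two checks. First, removed-compartment factors never enter the dictionaries; this follows by induction from the no-$R$ clause of Lemma~\ref{lem:finite_linear_closure} together with the fact that $f_{\mathrm Q}$ introduces only $S$ and $I$ factors. Second, the admissible extensions of an $S_i$ or $E_i$ factor are indexed by $\mathcal{N}^{\mathrm{in}}(i)$, including the diagonal channel when present, so repeated factors and diagonal channels do not break the bound of $\Delta$. With these checks in place, the remaining steps are direct applications of the earlier lemmas.
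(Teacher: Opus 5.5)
Your proposal is correct and follows essentially the same route as the paper's proof. It uses the same base-case count $m=\sum_i|\mathcal{N}^{\mathrm{in}}(i)|\le\Delta n$, the same Leibniz-rule bound $|\operatorname{Mon}_{k+1}(\mathcal{L}_{f_{\mathrm Q}}q)|\le k\Delta$ giving $|\mathcal{G}_{k+1}^{\mathcal{E}}|\le k\Delta|\mathcal{M}_k^{\mathcal{E}}|$, and the same application of Lemma~\ref{lem:finite_linear_closure} at degree $k+1$; your explicit inductive check that removed-compartment factors never enter the dictionaries, so the lemma's hypothesis is met, is a worthwhile point that the paper leaves implicit.
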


\begin{proof}
For $k=2$, \eqref{eq:graph_induced_monomial_base} gives
$|\mathcal{M}_2^{\mathcal{E}}|=2m$. Every channel contributes to exactly one
receiving-node in-neighborhood, including a diagonal channel when present, so
\begin{equation*}
m
=
\sum_{i\in\mathcal{V}}|\mathcal{N}^{\mathrm{in}}(i)|
\leq
n\Delta.
\end{equation*}
Therefore, $|\mathcal{M}_2^{\mathcal{E}}|\leq2\Delta n$, and
$C_2(\Delta)=2\Delta$ is valid.

Assume now that
$|\mathcal{M}_k^{\mathcal{E}}|\leq C_k(\Delta)n$ for some fixed $k\geq2$.
Take a commutative monomial
$q=x_{\alpha_1}\cdots x_{\alpha_k}\in\mathcal{M}_k^{\mathcal{E}}$, with
repeated factors allowed. By the product rule and
\eqref{eq:lie_derivative_definition},
\begin{equation*}
\mathcal{L}_{f_{\mathrm Q}}q
=
\sum_{s=1}^{k}
\left(\prod_{r\neq s}x_{\alpha_r}\right)
(f_{\mathrm Q})_{\alpha_s}(x).
\end{equation*}
Only susceptible and exposed components of $f_{\mathrm Q}$ are nonzero, and
each such component is a sum over at most $\Delta$ incoming transmission
channels. Differentiating one factor therefore generates at most $\Delta$
degree-$(k+1)$ candidates, while infectious and removed factors generate none.
Since $q$ has $k$ factor occurrences,
\begin{equation*}
\left|
\operatorname{Mon}_{k+1}
\left(\mathcal{L}_{f_{\mathrm Q}}q\right)
\right|
\leq
k\Delta.
\end{equation*}
By the definition of $\Gamma_{f_{\mathrm Q}}^{(k+1)}$, taking the union
over $q\in\mathcal{M}_k^{\mathcal{E}}$ and merging commutatively identical
products yields the quadratic seed bound
\begin{equation*}
|\mathcal{G}_{k+1}^{\mathcal{E}}|
\leq
k\Delta|\mathcal{M}_k^{\mathcal{E}}|.
\end{equation*}
The final degree-$(k+1)$ dictionary is
$\operatorname{Sat}_{f_{\mathrm L}}^{(k+1)}
(\mathcal{G}_{k+1}^{\mathcal{E}})$. Lemma~\ref{lem:finite_linear_closure},
applied at degree $k+1$, gives
\begin{equation*}
|\mathcal{M}_{k+1}^{\mathcal{E}}|
\leq
2^{k+1}|\mathcal{G}_{k+1}^{\mathcal{E}}|
\leq
2^{k+1}k\Delta C_k(\Delta)n.
\end{equation*}
Thus \eqref{eq:fixed_degree_constant_recursion} completes the induction.

The two multiplicative effects have different meanings. The factor $k\Delta$
counts degree-raising transmission extensions, whereas the factor $2^{k+1}$
is a uniform bound on same-degree infectious-to-exposed variants introduced by
linear saturation. Neither depends on $n$ when $k$ and $\Delta$ are fixed. A
diagonal channel may reproduce an existing node index and create repeated
factors, but it cannot exceed either bound. Finally, summing
\eqref{eq:fixed_degree_monomial_bound} over the fixed set of degrees
$2,\ldots,d$ gives \eqref{eq:fixed_order_linear_dimension}.
\end{proof}

The constants in \eqref{eq:fixed_degree_constant_recursion} are deliberately
conservative. Commutative canonicalization, repeated indices, overlapping local
patterns, and stabilization of the linear-saturation process can substantially reduce the
actual dictionaries. 
The proposition establishes the fixed-order scaling with network size, but the
displayed constants need not be tight.

\begin{corollary}[Comparison with complete ordered and symmetric liftings]
\label{cor:comparison_complete_polynomial_lifting}
Under the hypotheses of Proposition~\ref{prop:higher_order_linear_growth}, the
graph-induced state satisfies $\dim z_d^{\mathcal{E}}(x)=O(n)$ for fixed $d$.
The complete ordered Kronecker lifting up to order $d$ has dimension
\begin{equation}
D_{\mathrm{ord},d}
=
\sum_{k=1}^{d}(4n)^k
=
O(n^d),
\label{eq:complete_ordered_lifting_dimension}
\end{equation}
whereas the complete symmetric monomial lifting has dimension
\begin{equation}
D_{\mathrm{sym},d}
=
\sum_{k=1}^{d}\binom{4n+k-1}{k}
=
O(n^d).
\label{eq:complete_symmetric_lifting_dimension}
\end{equation}
Thus, both complete coordinate conventions have order-dependent polynomial
growth, while the linearly closed graph-induced lifting has linear growth
under the stated bounded-in-degree hypothesis.
\end{corollary}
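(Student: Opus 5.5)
The first claim needs no new argument. Proposition~\ref{prop:higher_order_linear_growth} already gives $\dim z_d^{\mathcal{E}}(x)=O(n)$ for fixed $d$ under the bounded-in-degree hypothesis, so I will quote \eqref{eq:fixed_order_linear_dimension}. The remaining work is to compute the two complete-lifting dimensions and show that each is $O(n^d)$. I will also argue that each is $\Theta(n^d)$, so that the contrast with linear growth is genuine and not just a matter of upper bounds.

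For the ordered convention, the complete lifting through order $d$ stacks $x,x^{\otimes 2},\ldots,x^{\otimes d}$. By \eqref{eq:ordered_kronecker_tensor} the $k$-th block has $(4n)^k$ coordinates, so the total is \eqref{eq:complete_ordered_lifting_dimension}. This is a geometric sum: for $n\geq1$ each term satisfies $(4n)^k\leq(4n)^d$, so $D_{\mathrm{ord},d}\leq d(4n)^d=O(n^d)$. The last term alone gives the matching lower bound $D_{\mathrm{ord},d}\geq(4n)^d$.

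For the symmetric convention, I will use the count \eqref{eq:symmetric_degree_k_dimension} of degree-$k$ monomials in $4n$ variables. I would bound
\[
\binom{4n+k-1}{k}=\frac{(4n)(4n+1)\cdots(4n+k-1)}{k!}\leq\frac{(4n+k-1)^k}{k!}.
\]
For fixed $d$ and $k\leq d$, $4n+k-1\leq(3+d)n$ whenever $n\geq1$, so each term is at most $((3+d)n)^k/k!=O(n^k)\subseteq O(n^d)$. Summing over the fixed range $k=1,\ldots,d$ then gives $D_{\mathrm{sym},d}=O(n^d)$. For the lower bound, the $k=d$ term is at least $(4n)^d/d!$.

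The final sentence of the corollary then follows by comparison. Both complete conventions grow like $n^d$, with an exponent that depends on the order, whereas the graph-induced dimension is at most a $d$-dependent constant times $n$. The only step needing care is keeping $d$ fixed, so that constants such as $d$, $d!$ and $(3+d)^d$, as well as the constants $C_k(\Delta)$, are independent of $n$. This is bookkeeping rather than a real difficulty.
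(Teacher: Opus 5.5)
Your proposal is correct and follows the paper's route. You quote Proposition~\ref{prop:higher_order_linear_growth} for the graph-induced state, count $(4n)^k$ ordered and $\binom{4n+k-1}{k}$ symmetric coordinates per degree, and sum over the fixed range $k=1,\ldots,d$; your explicit constants $d(4n)^d$ and $((3+d)n)^k/k!$, together with the matching lower bounds $(4n)^d$ and $(4n)^d/d!$, go slightly beyond the paper, which gives only $O(n^k)$ per degree, and they show that the $n^d$ growth is genuine rather than merely an upper estimate.
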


\begin{proof}
Proposition~\ref{prop:higher_order_linear_growth} establishes the
$O(n)$ growth of the graph-induced dictionary at fixed lifting order.
In contrast, the complete ordered degree-$k$ tensor has $(4n)^k$
coordinates, yielding
\eqref{eq:complete_ordered_lifting_dimension}, whereas the complete
symmetric degree-$k$ dictionary has dimension
\eqref{eq:symmetric_degree_k_dimension}, which is $O(n^k)$ for fixed
$k$. Summing over the retained degrees yields
\eqref{eq:complete_symmetric_lifting_dimension}. Hence, despite their
finite-dimensional differences, the ordered and symmetric constructions both
scale as $O(n^d)$ at fixed maximum order $d$.
\end{proof}

\begin{remark}[Sparse channel count versus bounded in-degree]
A sparse total channel count is sufficient for linear growth of the first
edge-closure state, but not for the fixed-order conclusion in
Proposition~\ref{prop:higher_order_linear_growth}. A reciprocal directed star
without diagonal channels has $m=2(n-1)$ and hub in-degree $n-1$. Adding one
diagonal channel at every node gives $m=3n-2$ and hub in-degree $n$. In both
cases, \eqref{eq:edge_lift_dimension} remains linear in $n$, but no degree
bound independent of $n$ exists. The quadratic seed generation around the hub
may therefore produce a growing number of local extensions, and linear saturation
cannot restore a uniform bound. This example shows that sparsity in total
channel count does not by itself justify fixed-order linear scaling. It does
not assert a universal asymptotic law for all hub-dominated families.
\end{remark}

The distinction is epidemiologically relevant. Geographically structured or
regional transmission networks may have bounded incoming connectivity, whereas
hub-dominated mobility systems may not. The graph-induced construction makes
both mechanisms explicit: the total number of channels controls the first
edge-closure lifting, the quadratic field generates higher-order local patterns, and the
linear field completes the compartmental variants of those patterns without
introducing new nodes.

\subsection{Residual bounds for the first edge-closure lifting}

Recall that $z_2^{\mathcal{E}}(x)$ satisfies
\eqref{eq:lifted_dynamics_tensor}, with residual blocks defined in
\eqref{eq:residual_block_structure}--\eqref{eq:first_residual_se}. To avoid
confusion with the lifting order $d$, define the weighted in-degree
\begin{equation*}
\delta_i^{\mathrm{in}}
=
\sum_{\ell:(i,\ell)\in\mathcal{E}}a_{i\ell},
\quad
\delta_{\max}^{\mathrm{in}}
=
\max_{i\in\mathcal{V}}\delta_i^{\mathrm{in}}.
\end{equation*}
The quantity $\delta_i^{\mathrm{in}}$ is the total incoming transmission
intensity at receiving node $i$. If $(i,i)\in\mathcal{E}$, then $a_{ii}$ is
included as the within-subpopulation contribution. Thus, $m$,
$\delta_i^{\mathrm{in}}$, and $\delta_{\max}^{\mathrm{in}}$ consistently
include diagonal and off-diagonal channels.

\begin{proposition}[Residual bound for the first edge-closure lifting]
\label{prop:first_residual_bound}
For every $x\in\Omega$, the cubic residual $r_2^{\mathcal{E}}(x)$ satisfies
\begin{equation}
\left\|r_2^{\mathcal{E}}(x)\right\|_{\infty}
\leq
2\beta\delta_{\max}^{\mathrm{in}},
\label{eq:r2_infty_bound}
\end{equation}
and
\begin{equation}
\left\|r_2^{\mathcal{E}}(x)\right\|_2
\leq
\sqrt{5m}\,\beta\delta_{\max}^{\mathrm{in}}.
\label{eq:r2_two_bound}
\end{equation}
\end{proposition}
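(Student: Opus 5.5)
The plan is to bound each nonzero residual component separately using only the explicit formulas in Proposition~\ref{prop:first_edge_closure_lifting} and the confinement of trajectories to $\Omega$ from Proposition~\ref{prop:positive_invariance}. On $\Omega$ every compartmental variable lies in $[0,1]$, so any product of such variables also lies in $[0,1]$. Since $a_{ij}\geq 0$ and $\beta>0$, every cubic monomial in \eqref{eq:first_residual_si}--\eqref{eq:first_residual_se} can be replaced by $1$. What remains are weighted sums of the form $\beta\sum_{\ell:(i,\ell)\in\mathcal{E}}a_{i\ell}=\beta\delta_i^{\mathrm{in}}\leq\beta\delta_{\max}^{\mathrm{in}}$. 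The $x$-block of $r_2^{\mathcal{E}}$ is zero by \eqref{eq:residual_block_structure}, so it contributes nothing to either norm.

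For the $SI$ block, \eqref{eq:first_residual_si} gives, for each channel $e$,
\[
\bigl|(r_{SI}^{\mathcal{E}}(x))_e\bigr|\leq\beta\sum_{\ell:(i_e,\ell)\in\mathcal{E}}a_{i_e\ell}S_{i_e}I_\ell I_{j_e}\leq\beta\delta_{i_e}^{\mathrm{in}}\leq\beta\delta_{\max}^{\mathrm{in}}.
\]
For the $SE$ block, I will apply the triangle inequality to \eqref{eq:first_residual_se}. The first sum is bounded by $\beta\delta_{i_e}^{\mathrm{in}}$. The second sum is indexed by the incoming channels of the source node $j_e$, so it is bounded by $\beta\delta_{j_e}^{\mathrm{in}}$. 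Together,
\[
\bigl|(r_{SE}^{\mathcal{E}}(x))_e\bigr|\leq\beta\bigl(\delta_{i_e}^{\mathrm{in}}+\delta_{j_e}^{\mathrm{in}}\bigr)\leq2\beta\delta_{\max}^{\mathrm{in}}.
\]
The $SE$ bound is the larger of the two, so taking the maximum over all $2m$ components yields \eqref{eq:r2_infty_bound}.

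For \eqref{eq:r2_two_bound}, I will sum the squared component bounds: the $m$ components of the $SI$ block contribute at most $m(\beta\delta_{\max}^{\mathrm{in}})^2$, and the $m$ components of the $SE$ block contribute at most $4m(\beta\delta_{\max}^{\mathrm{in}})^2$. This gives $\|r_2^{\mathcal{E}}(x)\|_2^2\leq5m(\beta\delta_{\max}^{\mathrm{in}})^2$, and taking square roots gives the claim.

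The only point needing care is the second sum in the $SE$ component. It is indexed by the in-neighborhood of $j_e$ rather than $i_e$, which is why a different weighted in-degree appears and why the factor $2$ comes from adding two maxima. One could sharpen the factor using the opposite signs of the two sums, since both lie in $[0,\beta\delta_{\max}^{\mathrm{in}}]$. That refinement is not needed for the stated bounds. Diagonal channels require no separate treatment, because repeated factors such as $I_i^2$ or $S_i^2$ still lie in $[0,1]$ and $a_{ii}$ is already counted in $\delta_i^{\mathrm{in}}$.
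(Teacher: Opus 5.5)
Your proposal is correct and follows the paper's proof essentially line by line. The shared steps are: componentwise bounds from $x\in\Omega$ (all variables in $[0,1]$, $a_{ij}\geq 0$), the triangle inequality on the $SE$ component giving $\beta(\delta_{i_e}^{\mathrm{in}}+\delta_{j_e}^{\mathrm{in}})$, the zero state block, and the sum $m(\beta\delta_{\max}^{\mathrm{in}})^2+4m(\beta\delta_{\max}^{\mathrm{in}})^2$. Your side remark is also right and goes beyond the paper: both $SE$ sums are nonnegative and at most $\beta\delta_{\max}^{\mathrm{in}}$, so their difference is bounded by $\beta\delta_{\max}^{\mathrm{in}}$, which would sharpen the constants to $\beta\delta_{\max}^{\mathrm{in}}$ and $\sqrt{2m}\,\beta\delta_{\max}^{\mathrm{in}}$.
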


\begin{proof}
Because $x\in\Omega$, all compartmental variables belong to $[0,1]$. From
\eqref{eq:first_residual_si},
\begin{align*}
\left|\left(r_{SI}^{\mathcal{E}}(x)\right)_e\right|
&\leq
\beta\sum_{\ell:(i_e,\ell)\in\mathcal{E}}a_{i_e\ell}
=
\beta\delta_{i_e}^{\mathrm{in}}
\leq
\beta\delta_{\max}^{\mathrm{in}}.
\end{align*}
Similarly, \eqref{eq:first_residual_se} and the triangle inequality give
\begin{align*}
\left|\left(r_{SE}^{\mathcal{E}}(x)\right)_e\right|
&\leq
\beta\left(
\sum_{\ell:(i_e,\ell)\in\mathcal{E}}a_{i_e\ell}
+
\sum_{\ell:(j_e,\ell)\in\mathcal{E}}a_{j_e\ell}
\right)\\
&=
\beta\left(\delta_{i_e}^{\mathrm{in}}+
\delta_{j_e}^{\mathrm{in}}\right)
\leq
2\beta\delta_{\max}^{\mathrm{in}}.
\end{align*}
The first block of $r_2^{\mathcal{E}}$ is zero, so
\eqref{eq:r2_infty_bound} follows. The $SI$ and $SE$ residual blocks each
have $m$ components. Therefore,
\begin{equation*}
\left\|r_2^{\mathcal{E}}(x)\right\|_2^2
\leq
m\left(\beta\delta_{\max}^{\mathrm{in}}\right)^2
+
m\left(2\beta\delta_{\max}^{\mathrm{in}}\right)^2
=
5m\left(\beta\delta_{\max}^{\mathrm{in}}\right)^2.
\end{equation*}
Taking square roots yields \eqref{eq:r2_two_bound}.
\end{proof}

The bound separates network structure from the realized epidemic state. Large
weighted in-degree may reflect strong within-subpopulation transmission,
multiple external sources, large channel weights, or a combination of these.
It can therefore increase the uniform worst-case bound, but does not determine
the residual attained along a particular trajectory. Repeated factors and
algebraic overlap associated with diagonal channels may make the bound
conservative without invalidating it.

\begin{remark}[Residuals in low-prevalence regimes]
\label{rem:low_prevalence_residuals}
Every monomial in $r_{SI}^{\mathcal{E}}$ contains two infectious factors,
whereas every monomial in $r_{SE}^{\mathcal{E}}$ contains at least one. The
uniform bounds in Proposition~\ref{prop:first_residual_bound} may therefore be
substantially conservative at low prevalence. In particular,
$r_2^{\mathcal{E}}(x)=0$ whenever all infectious compartments vanish, and hence
at every disease-free equilibrium.
\end{remark}

\subsection{Homogeneous higher-order truncation residuals}

Linear saturation includes all same-degree terms generated by the linear field. By
Proposition~\ref{prop:order_d_homogeneous_residual}, an order-$d$ truncation
satisfies \eqref{eq:higher_order_tensor_dynamics}, and its residual has the
last-block form \eqref{eq:order_d_residual_last_block}. Every nonzero component
is generated by $\mathcal{L}_{f_{\mathrm Q}}\Phi_d^{\mathcal{E}}$ and is
therefore homogeneous of degree $d+1$. The residual is therefore a genuine polynomial-degree truncation forcing, not
a mixture of omitted same-degree and higher-degree coordinates.

\begin{proposition}[Boundedness of higher-order residuals]
\label{prop:higher_order_residual_bound}
For every lifting order $d\geq2$, there is a finite constant
$\varepsilon_d\geq0$ such that
\begin{equation*}
\left\|r_d^{\mathcal{E}}(x)\right\|_2
\leq
\varepsilon_d,
\quad x\in\Omega.
\end{equation*}
Moreover, every nonzero component of $r_d^{\mathcal{E}}$ is homogeneous of
degree $d+1$ and contains at least one infectious factor.
\end{proposition}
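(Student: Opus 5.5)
We prove the three claims in turn: the bound by compactness, homogeneity by citation, and the infectious factor by inspecting $f_{\mathrm Q}$. The residual is given explicitly by Proposition~\ref{prop:order_d_homogeneous_residual}: its only nonzero block is $\mathcal{L}_{f_{\mathrm Q}}\Phi_d^{\mathcal{E}}(x)$.

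\emph{Bound.} This block is a finite vector of polynomials with constant coefficients, and its length is $|\mathcal{M}_d^{\mathcal{E}}|$, which is finite by Lemma~\ref{lem:finite_linear_closure} and the recursion. Polynomials are continuous, and $\Omega$ is compact as a product of simplices. Hence $\varepsilon_d:=\max_{x\in\Omega}\|r_d^{\mathcal{E}}(x)\|_2$ exists and is finite.

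For an explicit constant, in the spirit of Proposition~\ref{prop:first_residual_bound}, use that every coordinate lies in $[0,1]$ on $\Omega$ (Proposition~\ref{prop:positive_invariance}). Take $q=x_{\alpha_1}\cdots x_{\alpha_d}\in\mathcal{M}_d^{\mathcal{E}}$ and expand $\mathcal{L}_{f_{\mathrm Q}}q$ by the product rule as in the proof of Proposition~\ref{prop:higher_order_linear_growth}. Each of the $d$ factor occurrences contributes $(f_{\mathrm Q})_{\alpha_s}$ times a product of entries in $[0,1]$. We have $|(f_{\mathrm Q})_{\alpha_s}|\le\beta\delta_{\max}^{\mathrm{in}}$ for susceptible and exposed factors, and $(f_{\mathrm Q})_{\alpha_s}=0$ otherwise. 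Therefore each component satisfies $|(r_d)_q|\le d\beta\delta_{\max}^{\mathrm{in}}$, which gives $\varepsilon_d=\sqrt{|\mathcal{M}_d^{\mathcal{E}}|}\,d\beta\delta_{\max}^{\mathrm{in}}$.

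\emph{Homogeneity.} This follows directly from Lemma~\ref{lem:degree_propagation} applied to each homogeneous degree-$d$ monomial $q$. Combining terms cannot change the degree; it can only cancel some of them.

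\emph{Infectious factor.} We inspect the structure of $f_{\mathrm Q}=\matM\phi^{SI}$. By Proposition~\ref{prop:edge_based_quadratic}, every nonzero component of $f_{\mathrm Q}$ is a linear combination of monomials $S_iI_\ell$, each of which contains an infectious factor. In the product-rule expansion, every term $\bigl(\prod_{r\ne s}x_{\alpha_r}\bigr)(f_{\mathrm Q})_{\alpha_s}$ is therefore a sum of monomials each divisible by some $I_\ell$.

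A nonzero component of $r_d^{\mathcal{E}}$ is a nontrivial linear combination of such monomials after like terms are combined. Every monomial that survives still carries an infectious factor. So the component is a nonzero polynomial all of whose monomials contain an $I$ variable, and in particular it vanishes when all $I_i=0$. This extends Remark~\ref{rem:low_prevalence_residuals}.

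\emph{Main obstacle.} The only delicate point is the interpretation of ``contains at least one infectious factor''. We read it monomial-wise, after coefficients are combined. Cancellation between terms cannot create a monomial that lacks an $I$ factor, because the surviving monomials form a subset of those produced. With that reading, everything else is routine.
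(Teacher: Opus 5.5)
Your proof is correct and follows the same route as the paper: compactness of $\Omega$ and continuity of the finite polynomial map $\mathcal{L}_{f_{\mathrm Q}}\Phi_d^{\mathcal{E}}$ give the bound, Lemma~\ref{lem:degree_propagation} gives homogeneity, and the infectious factor comes from every nonzero component of $f_{\mathrm Q}$, read monomial-wise exactly as the paper does. Your explicit constant $\sqrt{|\mathcal{M}_d^{\mathcal{E}}|}\,d\beta\delta_{\max}^{\mathrm{in}}$ is a valid extra that the paper does not state; for $d=2$ it gives $\sqrt{8m}\,\beta\delta_{\max}^{\mathrm{in}}$, slightly weaker than the paper's $\sqrt{5m}\,\beta\delta_{\max}^{\mathrm{in}}$ because you count all factor occurrences rather than only the susceptible and exposed ones, and $x_\alpha\in[0,1]$ holds on $\Omega$ by definition, so Proposition~\ref{prop:positive_invariance} is not needed there.
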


\begin{proof}
For fixed $d$, Proposition~\ref{prop:order_d_homogeneous_residual} expresses
$r_d^{\mathcal{E}}$ as a finite polynomial map whose nonzero entries are
components of $\mathcal{L}_{f_{\mathrm Q}}\Phi_d^{\mathcal{E}}$. Homogeneity
of degree $d+1$ follows from Lemma~\ref{lem:degree_propagation}. Each nonzero
component of $f_{\mathrm Q}$ contains an infectious factor, so the same is true
of every residual monomial. Since $\Omega$ is compact and the residual is
continuous, its Euclidean norm attains a finite maximum on $\Omega$, which
defines $\varepsilon_d$.
\end{proof}

In particular, $r_d^{\mathcal{E}}(x)=0$ whenever all infectious compartments
vanish. Proposition~\ref{prop:higher_order_residual_bound} is an existence
result based on continuity and compactness. It does not imply that
$\varepsilon_d$ decreases with $d$, that the sequence of finite liftings
converges, or that higher order necessarily improves a finite-horizon
trajectory approximation. The explicit graph-dependent formula in
Proposition~\ref{prop:first_residual_bound} is specific to the first
edge-closure lifting. Sharper computable bounds at arbitrary order require
additional combinatorial
and coefficient analysis.

The order-$d$ representation can thus be viewed as a graph-structured
linear system driven by a bounded homogeneous perturbation:
\begin{equation*}
\dot z
=
\matK_d^{\mathcal{E}}z+r_d^{\mathcal{E}}(x),
\quad
\left\|r_d^{\mathcal{E}}(x)\right\|_2
\leq
\varepsilon_d.
\end{equation*}
This interpretation permits linear-systems analysis in the retained
coordinates while keeping the omitted degree-$(d+1)$ forcing explicit.

\subsection{Epidemiological interpretation}

The results separate structural complexity from residual magnitude. The first
edge-closure lifting is controlled by the number of effective channels, whereas
higher-order growth reflects the local transmission patterns generated by the
quadratic field and completed under the linear compartmental dynamics.
Uniformly bounded incoming connectivity therefore yields linear fixed-order
growth, with diagonal and off-diagonal channels treated algebraically in the
same way.

The truncation residual collects the degree-$(d+1)$ transmission-supported
interactions omitted from an order-$d$ lifting. For the first edge-closure lifting, its uniform
bound depends explicitly on transmission rate, channel count, and maximum
weighted in-degree, while the value attained along a trajectory also depends
on the epidemic state and weight distribution. Every residual monomial contains
an infectious factor and hence vanishes at zero infectious prevalence.
Section~\ref{sec:numerical_illustrations} examines these effects numerically.

\section{Numerical illustrations}
\label{sec:numerical_illustrations}

This illustrative, non-calibrated study addresses three questions: dimensional
reduction relative to complete ordered Kronecker liftings, the first
edge-closure residual along SEIR trajectories, and the dependence of
higher-order dictionaries on local graph structure.

The residual
\begin{equation}
E_2(t)
=
\left\|r_2^{\mathcal{E}}(x(t))\right\|_2,
\label{eq:numerical_residual_measure}
\end{equation}
is evaluated along the original SEIR trajectory and therefore measures an
instantaneous closure defect, not a trajectory-prediction error. The
implementation computes the explicit $r_{SI}^{\mathcal{E}}$ and
$r_{SE}^{\mathcal{E}}$ blocks. The zero state block is omitted because it does
not affect $\|r_2^{\mathcal{E}}\|_2$. The topologies considered below are
illustrative rather than calibrated. They are chosen to isolate the effects of
sparsity, bounded in-degree, hub concentration, and the common
within-subpopulation transmission mechanism.

\subsection{Numerical setup, implementation, and reproducibility}

All simulations use the networked SEIR model in \eqref{eq:seir}, with
normalized local populations satisfying
$S_i(t)+E_i(t)+I_i(t)+R_i(t)=1$. The experiment is nondimensional: state
fractions, parameters, and time are interpreted in internally consistent model
units, with no conversion to SI units, calendar time, or a particular disease
scale. The illustrative parameter values are $\beta=0.8$, $\sigma=0.25$, and
$\gamma=0.1$. In the chosen model-time coordinate, the corresponding latent
and infectious time scales are $\sigma^{-1}=4$ and $\gamma^{-1}=10$.

The initial condition is localized at node $1$: $E_1(0)=I_1(0)=10^{-3}$,
$R_1(0)=0$, and $S_1(0)=1-E_1(0)-I_1(0)$. At all other nodes,
$S_i(0)=1$ and $E_i(0)=I_i(0)=R_i(0)=0$. The trajectories are evaluated on
$0\leq t\leq480$.

The reported structural quantities are the complete ordered quadratic
Kronecker dimension $D_{\mathrm{full},2}=4n+(4n)^2$ and the first edge-closure
dimension $D_{\mathrm{edge},2}=4n+2m$ from
\eqref{eq:edge_lift_dimension}. The dynamic quantity is the residual norm
$E_2(t)$ in \eqref{eq:numerical_residual_measure}. The right-hand side of
\eqref{eq:r2_two_bound} is denoted by
$B_2:=\sqrt{5m}\,\beta\delta_{\max}^{\mathrm{in}}$. Dimensional comparisons
use $n\in\{25,50,100,200,500\}$, whereas trajectory and residual calculations
use $n=100$.

Two complementary weighting configurations are used because they answer
different questions. In the unit-weight case, every active diagonal and
off-diagonal transmission channel has the same intensity. This convention is
natural for examining the effect of adding supported channels, but it also
allows the total incoming intensity to grow with in-degree. The second case
uses normalized weights, obtained by fixing the total incoming transmission
intensity at every receiving node:
\begin{equation}
\widetilde a_{ij}
=
\frac{a_{ij}}{\sum_{k=1}^{n}a_{ik}},
\quad
\sum_{j=1}^{n}\widetilde a_{ij}=1.
\label{eq:fixed_incoming_transmission_intensity}
\end{equation}
The normalization includes diagonal and off-diagonal channels and preserves the
support, so $m$, $D_{\mathrm{edge},2}$, and all graph-induced dictionary
dimensions are unchanged. Comparing the two cases distinguishes accumulation
of transmission intensity with in-degree from effects that persist through
support geometry, weight distribution, and nonlinear propagation.

Trajectories are computed with SciPy's \texttt{solve\_ivp}, using RK45 with
relative tolerance $10^{-8}$ and absolute tolerance $10^{-10}$. All summaries
are evaluated on the common grid $t_h=0.15h$, $h=0,\ldots,3200$. The grid peak
is $E_{2,\max}=\max_h E_2(t_h)$, with $t_{\mathrm{peak}}$ the first grid time
attaining it. The time average is
\begin{equation*}
\overline{E}_2
=
\frac{1}{T}\sum_{h=0}^{N-1}
\frac{E_2(t_h)+E_2(t_{h+1})}{2}(t_{h+1}-t_h),
\quad N=3200.
\end{equation*}

For the sparse random ensemble, master seed $12345$ defines deterministic
per-size streams $12345+n$, from which the $N_{\mathrm{rep}}=50$ realization
seeds are drawn and stored. Both weighting configurations use exactly the same
supports, seeds, initial condition, parameters, integration method, and output
grid. Ensemble summaries report arithmetic means and sample standard
deviations with denominator $N_{\mathrm{rep}}-1$, including the sample standard
deviation of the realization-specific peak times. The time-series band is the
pointwise sample standard deviation of the residual curves.

Automated validation checks dimensions, finiteness, local mass conservation,
and nonnegativity up to tolerance $10^{-8}$, without clipping. Across the
$104$ dynamic trajectories, the largest local conservation error is
$2.56\times10^{-15}$ and the minimum compartment value is
$-2.86\times10^{-9}$, within tolerance. Higher-order dictionaries are generated by applying
$\Gamma_{f_{\mathrm Q}}^{(k+1)}$ to produce the next-degree seed set and then
$\operatorname{Sat}_{f_{\mathrm L}}^{(k+1)}$ to obtain the saturated dictionary.
Compartment--node factors are canonically sorted, repeated factors are retained,
duplicate generation paths are merged, and saturation continues until no new
same-degree monomial appears.

\subsection{Network topologies}

Three representative network topologies are considered. They are a path graph,
a hub-dominated star graph, and a sparse random graph. The last topology lies
between regular bounded-in-degree connectivity and hub concentration.
Figure~\ref{fig:network_topologies} summarizes these three topology classes.

\begin{figure}[t]
\centering
\begingroup
\begin{tikzpicture}[
    node/.style={circle, draw, inner sep=1.4pt, minimum size=5pt},
    edge/.style={line width=0.6pt},
    panel/.style={font=\small},
    scale=0.95
]
\def\starbase#1{%
    \node[node] (#11) at (0,0.95) {};
    \node[node] (#12) at (0.67,0.67) {};
    \node[node] (#13) at (0.95,0) {};
    \node[node] (#14) at (0.67,-0.67) {};
    \node[node] (#15) at (0,-0.95) {};
    \node[node] (#16) at (-0.67,-0.67) {};
    \node[node] (#17) at (-0.95,0) {};
    \node[node] (#18) at (-0.67,0.67) {};
    \node[node] (#19) at (0,0) {};
}
\begin{scope}[xshift=0cm]
    \starbase{c}
    \foreach \i/\j in {1/2,2/3,3/4,4/5,5/6,6/7,7/8,8/9} {
        \draw[edge] (c\i) -- (c\j);
    }
    \node[panel] at (0,-1.55) {(a) Path graph};
\end{scope}
\begin{scope}[xshift=2.9cm]
    \starbase{s}
    \foreach \i in {1,...,8} {
        \draw[edge] (s9) -- (s\i);
    }
    \node[panel] at (0,-1.55) {(b) Star};
\end{scope}
\begin{scope}[xshift=5.8cm]
    \starbase{r}
    \draw[edge] (r1) -- (r2);
    \draw[edge] (r2) -- (r4);
    \draw[edge] (r4) -- (r6);
    \draw[edge] (r6) -- (r8);
    \draw[edge] (r8) -- (r3);
    \draw[edge] (r3) -- (r5);
    \draw[edge] (r5) -- (r7);
    \draw[edge] (r7) -- (r1);
    \draw[edge] (r1) -- (r9);
    \draw[edge] (r9) -- (r4);
    \draw[edge] (r9) -- (r6);
    \node[panel] at (0,-1.55) {(c) Sparse random};
\end{scope}
\end{tikzpicture}
\endgroup
\caption{Schematic representation of the three off-diagonal network topologies used in the numerical illustrations: path graph, star graph, and sparse random graph. The diagrams are illustrative and use fewer nodes than the numerical simulations. In the simulations, undirected contacts are represented by pairs of directed channels, and every node also has a diagonal within-subpopulation transmission channel. Diagonal loops are omitted here for visual clarity.}
\label{fig:network_topologies}
\end{figure}

The path graph connects each node to its immediate off-diagonal neighbors. Each undirected neighbor contact is represented by the two directed channels $(i,i+1)\in\mathcal{E}$ and $(i+1,i)\in\mathcal{E}$ for $i=1,\ldots,n-1$, and the $n$ diagonal channels $(i,i)\in\mathcal{E}$ are then included. Hence $m=2(n-1)+n=3n-2$, and the maximum in-degree is $3$ for $n\geq3$, so it remains uniformly bounded.

The star graph has one central node connected to every other node, with reciprocal off-diagonal channels $(1,j)\in\mathcal{E}$ and $(j,1)\in\mathcal{E}$ for $j=2,\ldots,n$, together with one diagonal channel at every node. Here again $m=3n-2$, but the maximum in-degree is $n$ because the receiving hub has $n-1$ off-diagonal source channels and one diagonal channel. The star is therefore sparse in total channel count but does not satisfy a uniform bounded-in-degree condition.

The sparse random graph is generated by independently sampling each off-diagonal ordered pair with probability $p=c/n$, with $c=4$, as in a directed Erd\H{o}s--R\'enyi construction \cite{ErdosRenyi1960}. The diagonal is not sampled: the $n$ within-subpopulation channels are added deterministically after the off-diagonal realization is generated. The expected total number of channels is therefore $n+c(n-1)=O(n)$, while the random off-diagonal component allows moderate local-connectivity heterogeneity.

For the sparse random topology, all reported quantities are averaged over
$N_{\mathrm{rep}}=50$ independent admissible realizations. The admission rule
is specified before simulation. Because the initial infection is localized at
node 1, that source node must have at least one off-diagonal outgoing
transmission channel. This rule excludes only realizations in which the initial
seed cannot reach any other subpopulation. Accepted seeds are retained and
reported for reproducibility. Tables report arithmetic means and sample
standard deviations. The grid peak time is summarized by its mean and sample
standard deviation. A grid-refinement check on the deterministic cases and
representative random realizations halves the output spacing from $0.15$ to
$0.075$. The largest relative change in the residual peak is below $0.24\%$.
The largest peak-time shift is one fine-grid step ($0.075$).

\begin{table}[t]
\centering
\caption{Unit-weight network quantities used to characterize the three supports before the fixed-intensity sensitivity is applied. For the sparse random graph, values are reported as mean $\pm$ sample standard deviation over $N_{\mathrm{rep}}=50$ independent realizations.}
\label{tab:network_quantities}
\begingroup
\small
\setlength{\tabcolsep}{4pt}
\begin{tabular}{lcccc}
\toprule
\textbf{Network} & $\boldsymbol{n}$ & $\boldsymbol{m}$ & $\boldsymbol{{\Delta}^{\mathrm{in}}}$ & $\boldsymbol{{\delta}_{\max}^{\mathrm{in}}}$ \\
\midrule
Path & 100 & 298 & 3 & 3 \\
Star & 100 & 298 & 100 & 100 \\
Sparse random & 100 & 499.76 $\pm$ 18.46 & 10.80 $\pm$ 1.25 & 10.80 $\pm$ 1.25 \\
\bottomrule
\end{tabular}
\endgroup
\end{table}

Table~\ref{tab:network_quantities} characterizes the common supports under unit weights. For $n=100$, the path and star both have $m=298$ effective channels, including $100$ diagonal and $198$ off-diagonal channels, but their maximum in-degrees are $3$ and $100$, respectively. The sparse random ensemble has $m=499.76\pm18.46$ and $\Delta^{\mathrm{in}}=10.80\pm1.25$. Under unit weights, maximum weighted in-degree equals maximum in-degree. Under \eqref{eq:fixed_incoming_transmission_intensity}, the support quantities $m$ and $\Delta^{\mathrm{in}}$ are unchanged, whereas $\delta_{\max}^{\mathrm{in}}=1$ for every realization and topology.

\subsection{Dimensional reduction}

The first numerical experiment compares the dimension of the proposed first edge-closure lifting with that of the complete ordered quadratic tensor lifting. By Proposition~\ref{prop:dimensional_reduction_sparse}, the ordered quadratic benchmark grows as $O(n^2)$, whereas the graph-induced lifting grows as $O(n)$ whenever $m=O(n)$.

For the path and star, both of which satisfy $m=3n-2$ after the diagonal channels are included, the first edge-closure dimension is $D_{\mathrm{edge},2}=4n+2(3n-2)=10n-4$, whereas the complete ordered quadratic tensor lifting has dimension $D_{\mathrm{full},2}=4n+16n^2$.

The reduction factors reported in Table~\ref{tab:dimension_comparison} use the
full ordered Kronecker tensor as the benchmark. A complete symmetric monomial
basis would remove permutation duplicates and yield smaller numerical
dimensions, although Corollary~\ref{cor:comparison_complete_polynomial_lifting}
shows that both complete benchmarks retain polynomial fixed-order growth in
$n$.

\begin{table}[t]
\centering
\caption{Comparison between complete ordered quadratic Kronecker tensor lifting dimensions and first edge-closure dimensions. For the sparse random graph, values are reported as mean $\pm$ standard deviation over $N_{\mathrm{rep}}=50$ independent realizations.}
\label{tab:dimension_comparison}
\begingroup
\small
\setlength{\tabcolsep}{4pt}
\begin{tabular}{lcccc}
\toprule
\textbf{Network} & $\boldsymbol{n}$ & $\boldsymbol{{D}_{\mathrm{full},2}}$ & $\boldsymbol{{D}_{\mathrm{edge},2}}$ & \textbf{Factor} \\
\midrule
Path & 100 & 160,400 & 996 & 161.04 \\
Star & 100 & 160,400 & 996 & 161.04 \\
Sparse random & 100 & 160,400 & 1,399.52 $\pm$ 36.91 & 114.69 $\pm$ 3.00 \\
\bottomrule
\end{tabular}
\endgroup
\end{table}

For $n=100$, the complete ordered quadratic tensor lifting has dimension
$D_{\mathrm{full},2}=160{,}400$, whereas the path and star both have
$D_{\mathrm{edge},2}=996$, corresponding to a reduction factor of approximately
$161.04$. Their equality is expected because both topologies have $m=3n-2$
effective transmission channels. Figure~\ref{fig:dimension_comparison} therefore
represents them with a single curve labeled ``Path/Star: edge''.

\begin{figure}[t]
\centering
\includegraphics[width=0.68\textwidth]{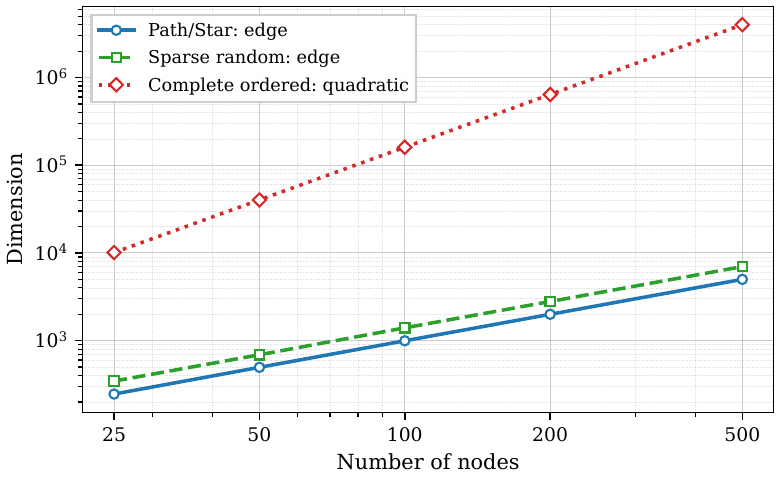}
\caption{Dimension of the complete ordered quadratic Kronecker tensor lifting and the first edge-closure lifting as a function of the number of nodes. The ordered quadratic benchmark grows as $O(n^2)$, while the graph-induced edge-closure dimension remains approximately linear over the sparse random graphs sampled here. The path and star curves coincide because both have $m=3n-2$ effective transmission channels after the common diagonal channels are included.}
\label{fig:dimension_comparison}
\end{figure}

The results agree with the predicted scaling. The complete ordered quadratic tensor lifting grows quadratically with $n$, while the graph-induced edge-closure dimension grows linearly for the path and star graphs and approximately linearly for the sparse random graph over the numerical range considered. Even for the random graph, the graph-induced dimension remains substantially
smaller than the ordered quadratic benchmark, and the gap widens with network
size because the dictionary is restricted to transmission-supported
observables.

\subsection{Residual dynamics and weighting sensitivity}

The second numerical experiment evaluates the first edge-closure residual
along trajectories of the original SEIR model. Since the lifted dynamics
satisfy \eqref{eq:lifted_dynamics_tensor}, $E_2(t)$ measures the pointwise
nonlinear forcing omitted from the finite dictionary. Table~\ref{tab:residual_summary}
reports the grid peak, time average, uniform bound, peak-to-bound ratio, and
peak time for both weighting configurations.

\begin{table}[H]
\centering
\caption{First edge-closure residuals for the two complementary weighting
configurations over $0\leq t\leq480$. Unit weights assign equal intensity to
every active transmission channel, whereas normalized weights satisfy
$\sum_j a_{ij}=1$ at every receiving node and thus impose a fixed total
incoming transmission intensity. For the sparse random graph, entries are
arithmetic mean $\pm$ sample standard deviation over
$N_{\mathrm{rep}}=50$ realizations. Deterministic values are reported without
a zero standard deviation. Time averages use the complete simulation horizon,
and peak times are expressed in nondimensional model-time units.}
\label{tab:residual_summary}
\begingroup
\small
\setlength{\tabcolsep}{2pt}
\begin{tabular}{@{}llccccc@{}}
\toprule
\textbf{Network} & \textbf{Weighting} & $\boldsymbol{{E}_{2,\max}}$ & $\boldsymbol{{\overline{E}}_2}$ & $\boldsymbol{{t}_{\mathrm{peak}}}$ & $\boldsymbol{{B}_2}$ & $\boldsymbol{{E}_{2,\max}/\mathbf{B}_2}$ \\
\midrule
Path & Unit & 0.25 & 0.085 & 15.60 & 92.64 & 0.0027 \\
 & Normalized & 0.12 & 0.080 & 30.90 & 30.88 & 0.0038 \\
\addlinespace[2pt]
Star & Unit & 3.81 & 0.029 & 4.65 & 3,088.04 & 0.0012 \\
 & Normalized & 0.61 & 0.018 & 22.05 & 30.88 & 0.0198 \\
\addlinespace[2pt]
Sparse & Unit & 1.87 $\pm$ 0.11 & 0.023 $\pm$ 0.001 & 11.00 $\pm$ 0.91 & 432.07 $\pm$ 52.84 & 0.0044 $\pm$ 0.0005 \\
 & Normalized & 0.77 $\pm$ 0.03 & 0.023 $\pm$ 0.001 & 35.16 $\pm$ 2.90 & 39.98 $\pm$ 0.74 & 0.0193 $\pm$ 0.0004 \\
\bottomrule
\end{tabular}
\endgroup
\end{table}

With unit weights, the star has the largest and earliest peak because its hub
combines many incoming channels with a much larger aggregate transmission
intensity. The path has the smallest peak but the largest time average because
a localized epidemic wave traverses the network over a much longer interval.
The sparse random ensemble is intermediate in peak magnitude, and the sample
standard deviation of its realization-specific peak time is reported
explicitly.

Fixing the total incoming transmission intensity changes the dynamic comparison
without changing any dictionary dimension. The star peak decreases from $3.81$
to $0.61$ and moves from $t=4.65$ to $t=22.05$, while the random-ensemble mean
peak becomes $0.77\pm0.03$. Thus, the unit-weight ranking is strongly influenced
by accumulation of transmission intensity at the hub. Nevertheless, support
geometry and distribution of the fixed intensity remain dynamically relevant.
The normalized path and star have the same channel count, maximum weighted
in-degree, and uniform bound $B_2=30.88$, yet their peak residuals differ by
more than a factor of five and occur at different times. The two cases more clearly separate support-dependent dictionary complexity
from weight-dependent residual forcing, without implying a universal
topology-only ordering.

The residual trajectories under unit and normalized channel weights are shown
in Figures~\ref{fig:residual_dynamics_unit} and
\ref{fig:residual_dynamics_fixed}, respectively.

The extended horizon resolves the apparent plateau previously observed for the
normalized path. At $t=240$ and $t=360$, its residual is still $0.1113$ and
$0.0868$, respectively, because a localized epidemic wave is propagating along
the path and the global residual norm changes little under translation of that
wave. After the wave reaches the terminal part of the network, the residual
decays to $3.32\times10^{-12}$ at $t=480$. Across all $104$ trajectories, the
largest final residual is below $1.90\times10^{-11}$ and the largest final value
of $E_i+I_i$ is below $1.47\times10^{-5}$, confirming that the simulations have
entered the final extinction phase.

\begin{figure}[H]
\centering
\includegraphics[width=0.72\textwidth]{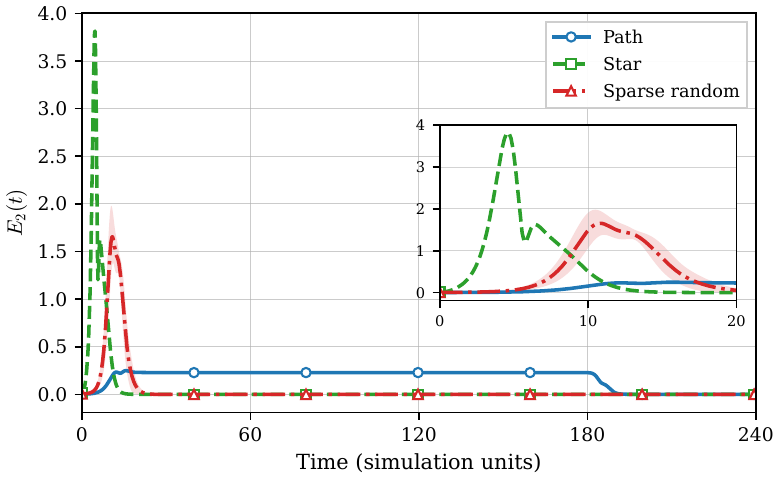}
\caption{Time evolution of $E_2(t)=\|r_2^{\mathcal{E}}(x(t))\|_2$ with unit channel weights over $0\leq t\leq240$. The inset over $0\leq t\leq20$ resolves the early peaks. Sparse-random curves are pointwise ensemble means, and shaded bands show one pointwise sample standard deviation.}
\label{fig:residual_dynamics_unit}
\end{figure}

\begin{figure}[H]
\centering
\includegraphics[width=0.72\textwidth]{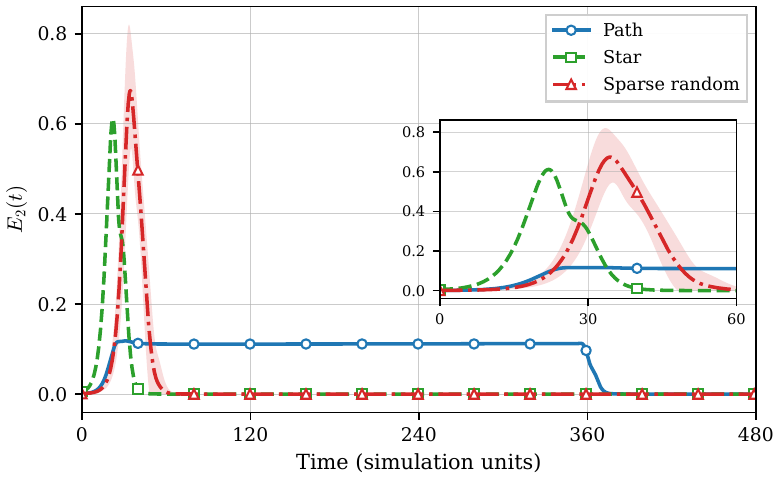}
\caption{Time evolution of $E_2(t)=\|r_2^{\mathcal{E}}(x(t))\|_2$ under
normalized weights, $\sum_j a_{ij}=1$, over $0\leq t\leq480$.
The inset over $0\leq t\leq60$ resolves the delayed initial peaks, while the
full horizon displays the final extinction phase. Sparse-random curves are
pointwise ensemble means, and shaded bands show one pointwise sample standard
deviation. The long path plateau is caused by a traveling epidemic wave rather
than a nonzero asymptotic residual.} \label{fig:residual_dynamics_fixed}
\end{figure}

Row normalization lowers and delays the star transient and slows propagation
along the path. All peaks remain below the uniform bounds in
Proposition~\ref{prop:first_residual_bound}, and the peak-to-bound ratios
illustrate their conservatism. The curves remain instantaneous closure defects,
not trajectory-error estimates.

\subsection{Higher-order dictionaries and local graph patterns}

The third experiment examines the saturated degree-three dictionary. The
quadratic field generates $\mathcal{G}_3^{\mathcal{E}}$ from
$\mathcal{M}_2^{\mathcal{E}}$, and linear saturation yields
$\mathcal{M}_3^{\mathcal{E}}$. Diagonal channels may generate repeated-factor
products such as $S_iI_i^2$, $S_iI_iE_i$, and $S_i^2I_i$. Products are treated
as commutative monomials, with permutations identified and duplicate paths
merged. The reported values therefore refer to the final saturated dictionary,
not only to its transmission-generated seeds.

For each topology, the computation records separately the number of cubic
seeds, the coordinates added by linear saturation, and the final value
\(\dim\Phi_3^{\mathcal{E}}=|\mathcal{M}_3^{\mathcal{E}}|\). Table~\ref{tab:cubic_dictionary_size}
reports the final dictionary and lifted-state dimensions. The complete ordered
Kronecker lifting dimension is evaluated from
\eqref{eq:complete_ordered_lifting_dimension} with \(d=3\), while the
graph-induced order-three state contains the original state together with the
closed degree-two and degree-three blocks.

\begin{table}[b]
\centering
\caption{Comparison between complete ordered Kronecker lifting dimensions up to
order three and linearly closed graph-induced cubic dictionary sizes. For the
sparse random graph, values are reported as mean \(\pm\) sample standard
deviation over \(N_{\mathrm{rep}}=50\) independent realizations.}
\label{tab:cubic_dictionary_size}
\begingroup
\small
\setlength{\tabcolsep}{3pt}
\begin{tabular}{lccccc}
\toprule
\textbf{Network} & \(\boldsymbol{n}\) &
\(\boldsymbol{D_{\mathrm{full},3}}\) &
\(\boldsymbol{\dim\Phi_2^{\mathcal{E}}}\) &
\(\boldsymbol{\dim\Phi_3^{\mathcal{E}}}\) &
\(\boldsymbol{\dim z_3^{\mathcal{E}}}\) \\
\midrule
Path & 100 & 64,160,400 & 596 & 3,462 & 4,458 \\
Star & 100 & 64,160,400 & 596 & 41,486 & 42,482 \\
Sparse random & 100 & 64,160,400 &
\(999.52 \pm 36.91\) &
\(11,214.84 \pm 816.19\) &
\(12,614.36 \pm 852.65\) \\
\bottomrule
\end{tabular}
\endgroup
\end{table}

For \(n=100\), the path produces \(2{,}176\) cubic seeds and gains
\(1{,}286\) coordinates through linear saturation, yielding
\(\dim\Phi_3^{\mathcal{E}}=3{,}462\). The star produces \(25{,}941\)
seeds and gains \(15{,}545\) saturation coordinates, yielding \(41{,}486\).
For the sparse random ensemble, the corresponding values are
\(7{,}046.76\pm516.53\) seeds,
\(4{,}168.08\pm300.35\) saturation additions, and a final dictionary size of
\(11{,}214.84\pm816.19\). The standard deviation of the final size is computed
directly across realization-specific closed dictionaries and is not obtained
by adding the two component standard deviations.

At $n=100$, all three graph-induced order-three states remain several orders
of magnitude smaller than the complete ordered dimension $64{,}160{,}400$.
Saturation enlarges each seed set but preserves the ordering: path, sparse
random ensemble, then hub-dominated star.

\begin{figure}[t]
\centering
\includegraphics[width=0.68\textwidth]{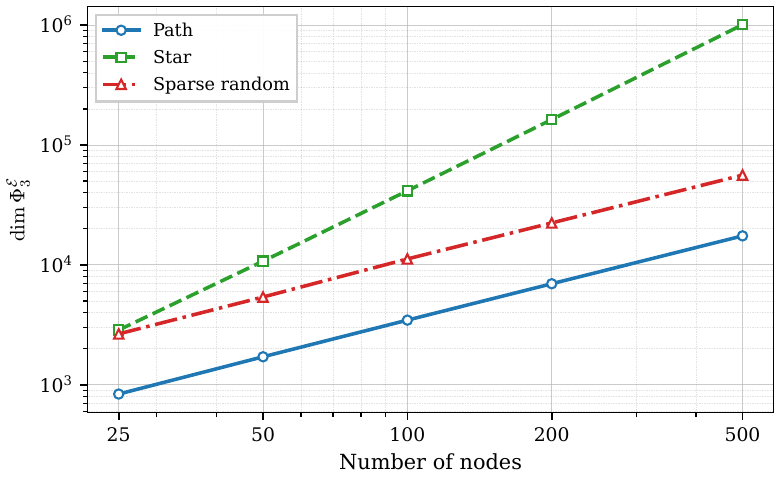}
\caption{Size of the saturated graph-induced cubic dictionary for the three
transmission topologies, including one diagonal channel per node. Values include
both quadratic-field seeds and same-degree coordinates added by linear
saturation.}
\label{fig:higher_order_dictionary_size}
\end{figure}

Figure~\ref{fig:higher_order_dictionary_size} illustrates the distinction
between sparsity and uniformly bounded maximum in-degree after linear saturation
has been enforced. The path retains bounded maximum in-degree and exhibits
linear cubic-dictionary growth. The star remains sparse in total channel count
but develops a much larger, approximately quadratic cubic dictionary because
the hub in-degree grows with \(n\). Adding one diagonal channel per node changes
each in-degree by only one and therefore does not alter this distinction. The
sparse random ensemble remains close to linear over the sampled range, although
this is an empirical finite-range observation rather than an asymptotic claim
for the random ensemble. The results are consistent with
Proposition~\ref{prop:higher_order_linear_growth}: linear saturation changes the
degree-dependent constants, but not the fixed-order \(O(n)\) conclusion for
families with uniformly bounded local connectivity.

\subsection{Discussion}

The experiments assess structural scalability and residual behavior rather
than provide a calibrated epidemic forecast. The first edge-closure dimension is governed by
channel count, while higher-order size reflects local patterns and their linear
saturation. Residual peaks remain well below the conservative uniform bounds
and show the influence of weighted in-degree, support geometry, and weight
distribution, but they establish neither a universal topology ranking nor an
epidemiological policy conclusion. Prediction, stability, and control remain
outside the scope of this numerical study.

\section{Conclusions}
\label{sec:conclusions}

This paper has developed a graph-induced tensor lifting for deterministic
networked SEIR models based on the complete effective transmission support. The framework yields an exact graph-supported quadratic representation and a first
edge-closure lifting built from susceptible--infectious and
susceptible--exposed observables. Diagonal and off-diagonal channels enter the
same algebraic construction, while retaining their local and cross-population interpretations. Higher degrees are generated by the
quadratic transmission field and saturated under the linear compartmental
dynamics. The resulting finite system is linear up to an explicit next-degree
residual.

The first lifted state has dimension $4n+2m$, including diagonal channels, and
therefore grows linearly when $m=O(n)$. At fixed order, the saturated
higher-degree dictionaries also grow linearly under a uniformly bounded maximum
in-degree, whereas complete ordered and symmetric polynomial liftings scale as
$O(n^d)$. Thus, bounded local connectivity rather than channel sparsity alone
is the relevant higher-order condition. Hub-dominated networks may have only $O(n)$
channels while producing much larger dictionaries.

The residual is the explicit next-degree transmission forcing omitted from the
finite dictionary. It is not a trajectory-prediction error. For the first
edge-closure lifting, its uniform bounds depend on transmission rate, channel
count, and maximum weighted
incoming intensity. Every residual monomial contains an infectious factor, so
the residual vanishes at disease-free equilibria and may be small at low
prevalence. At higher orders, continuity on the admissible compact state space
gives finite bounds, but convergence with increasing lifting order is not
established.

The numerical experiments separate support-dependent complexity from dynamic
weighting effects. Unit and normalized weights produce identical dictionary
dimensions but different residual trajectories. Normalization substantially
lowers and delays the star transient, while the path and star retain distinct
profiles even with equal channel count and total incoming intensity. The
higher-order experiment confirms the contrast between bounded local
connectivity and hub concentration: the path exhibits linear cubic growth,
whereas the star produces a much larger dictionary. These conclusions remain
specific to the selected parameters, initial condition, and graph families.

The framework also provides a basis for subsequent Lyapunov analysis,
residual-aware predictive control, sharper state-dependent bounds, and estimates
connecting closure residuals with finite-horizon trajectory errors. These
developments are left for future work. Extensions to other networked
compartmental systems are possible when nonlinear processes are supported on
identifiable interaction channels and differentiation yields a manageable
polynomial hierarchy. Nonpolynomial incidence, explicit migration, delays, and
additional nonlinear transitions require separate constructions. Together,
these results characterize a transmission-supported, linearly closed hierarchy,
its dimensional growth, and its explicit next-degree residual.

\section*{CRediT authorship contribution statement}

Enrique Baeyens: Conceptualization, Methodology, Formal analysis, Software,
Validation, Data curation, Visualization, Writing -- original draft, Writing
-- review and editing.

\section*{Funding}

This research did not receive any specific grant from funding agencies in the
public, commercial, or not-for-profit sectors.

\section*{Declaration of generative AI and AI-assisted technologies in the
manuscript preparation process}

During the preparation of this work, the author used OpenAI's ChatGPT and Codex
to support language editing and code review.
The author reviewed and edited all AI-assisted output and takes full
responsibility for the content of the manuscript.

\section*{Declaration of competing interest}

The author declares no known competing financial interests or personal
relationships that could have appeared to influence the work reported in this
paper.

\section*{Data and Code Availability}

The source code, synthetic outputs, tables, and figures are archived as version
1.0.2 on Zenodo~\cite{Baeyens2026Software}.

\noindent Archived release: \url{https://doi.org/10.5281/zenodo.21427810}.
No empirical, confidential, or third-party data were used in this work.

\bibliographystyle{elsarticle-num}
\setlength{\bibsep}{3pt}
\bibliography{references}

\end{document}